\newcommand{\etal}{\emph{et~al.}}
\newcommand{\bm}[1]{\boldsymbol{#1}}
\newcommand{\mat}[1]{\bm{#1}}
\newcommand{\EE}{\mathbb{E}}
\newcommand{\lambertW}{\mathsf{W}}
\newtheorem{theorem}{Theorem}
\newtheorem{definition}{Definition}
\newtheorem{lemma}{Lemma}
\newtheorem{corollary}{Corollary}
\title{A Finite Block Length Achievability Bound for Low Probability of Detection Communication}
\author{Nick~Letzepis\thanks{N. Letzepis is with the Cyber and Electronic Warfare Division, Defence Science and Technology Group, West Avenue, Edinburgh SA 5111, Adelaide, Australia. {Email: {\tt nick.letzepis@ieee.org}.}},~\IEEEmembership{Member,~IEEE.}}
\begin{document}
\maketitle

%
%
\begin{abstract}
Low probability of detection (or covert) communication refers to the scenario where information must be sent reliably to a receiver, but with low probability of detection by an adversary.  Recent works on the fundamental limits of this communication problem have established achievability and converse bounds that are asymptotic in the block length of the code.  This paper uses Gallager's random coding bound to derive a new achievability bound that is applicable to low probability of detection communication in the finite block length regime.  Further insights are unveiled that are otherwise hidden in previous asymptotic analyses.
\end{abstract}

%
%
\section{Introduction}

Often in defence and national security, information must be conveyed
with \emph{low probability of detection} (LPD) by unauthorised
adversaries. The problem is similar to \emph{secure communications},
but stricter in the sense that the sender cannot afford the
transmission to be detected let alone its message being
compromised. In an information theoretic setting, the problem becomes
one of maximising the amount of information that can be sent reliably,
whilst satisfying a constraint on the probability of detection.  A
number of authors have studied LPD communication in this
context~\cite{Hero2003p3235,Bash2013p1921,Wang2015isit,Wang2016p3493,Bloch2016p2334}.

In~\cite{Hero2003p3235}, Hero studied the LPD problem in the context
of space-time codes for multi-antenna communication subjected to
quasi-static fading. Rather than constraining the probability of
detection, Hero maximises the information rate whilst constraining the
\emph{{C}hernoff information}~\cite[Theorem~11.9.1]{Cover2006book},
i.e. the best achievable exponent in the adversary's Bayesian
probability of error.  A salient point here is that Hero is only
constraining the rate at which detection error probability decays to
zero with the code word length (or block length), not the detection
error probability itself.  It turns out that for additive white Gaussian noise (AWGN) channels
the Shannon capacity of the channel is zero for a constraint
explicitly on the detection error probability sum, i.e. the sum of the adversary's
false and miss detection error probabilities. Moreover, in this
instance LPD communication obeys the so called \emph{square-root law} (SRL),
i.e. only $\mathcal{O}(\sqrt{n})$ bits of information can be
transmitted reliably in $n$ channel usages whilst constraining the
detection error probability sum~\cite{Bash2013p1921}.  Recently, Wang~\etal~\cite{Wang2015isit,Wang2016p3493} proved that the SRL extends to LPD communication over a broad class of discrete
memoryless channels as well.  Moreover, they showed that the rate of
increase of information with $\sqrt{n}$ is proportional to mutual
information maximised over all input distributions subject to a
constraint on the relative entropy.  It should be noted, however, that
their results assume the adversary observes the same channel
outputs as the intended receiver, which is unrealistic in many
practical situations.  In~\cite{Bloch2016p2334}, Bloch considers the LPD problem from a resolvability perspective for the more general case when both the receiver and adversary's channels are separate discrete memoryless channels (i.e. they do not observe the same channel outputs).  Not only does Bloch prove the SRL in this more general setting, but also fundamental limits on the asymptotic scaling of the message and key size when communication must be both covert and secret.  Central to the achievability results of~\cite{Wang2015isit,Wang2016p3493,Bloch2016p2334} is the use of a low weight, or \emph{sparse signalling} scheme (as referred to in this paper) to satisfy the LPD constraint.  This is where the probability of sending an \emph{innocent symbol} (defined as the channel input when no communication takes place~\cite{Bloch2016p2334}) approaches $1$ as $n\rightarrow \infty$.

This paper takes a different approach to~\cite{Bash2013p1921,Wang2015isit,Wang2016p3493,Bloch2016p2334} to establish an achievability bound that applies to finite block length codes.  The contributions of this paper are summarised as follows.  Following important preliminary details in Section~\ref{sec:prelim}, Section~\ref{sec:prob_det} revisits the implications of the LPD constraint on the input signalling density.  It is shown that the LPD constraint can be recast as a constraint on the chi-squared distance between the densities of the adversary's observations conditioned on transmission and no transmission, similar to~\cite{Bash2013p1921,Wang2015isit,Wang2016p3493,Bloch2016p2334}, but without using Pinsker's inequality~\cite[Theorem~2.33]{Yeung2002book}, Taylor series expansions or bounds on the natural logarithm.  Using this relationship the \emph{spareness factor}, defined as the probability of sending a non-innocent symbol, is derived in terms of the block length, LPD constraint and chi-squared distance.  In Section~\ref{sec:ach_lpd_info}, Gallager's error exponent~\cite{Gallager1965p3,Gallager1968book} is lower bounded in terms of the sparseness factor and the exponent of the density of the non-innocent symbols.  Combining this result with the constraint on the sparseness factor, a finite block length lower bound is derived on the number of bits that can be transmitted with non-vanishing LPD and decoding error probability.  Interestingly, in the finite block length regime the bound indicates there is an optimal block length that maximises the achievable information rate (bits per channel use), i.e. the achievable rate increases with $n$ until it reaches this optimal block length and then begins decreasing at a rate proportional to $1/\sqrt{n}$ as a result of the SRL.  In the asymptotic large block length regime, the lower bound can be written in terms of the mutual information of the Rx's channel similar to~\cite{Wang2015isit,Wang2016p3493,Bloch2016p2334}.  In Section~\ref{sec:lpd_examples}, the utility of the finite block length achievability bound is demonstrated for the well known binary symmetric channel (BSC) and AWGN channel~\cite{Cover2006book}.

%
%
\section{Preliminaries} \label{sec:prelim}

\begin{figure}[t]
\centering
\includegraphics[width=0.7\columnwidth]{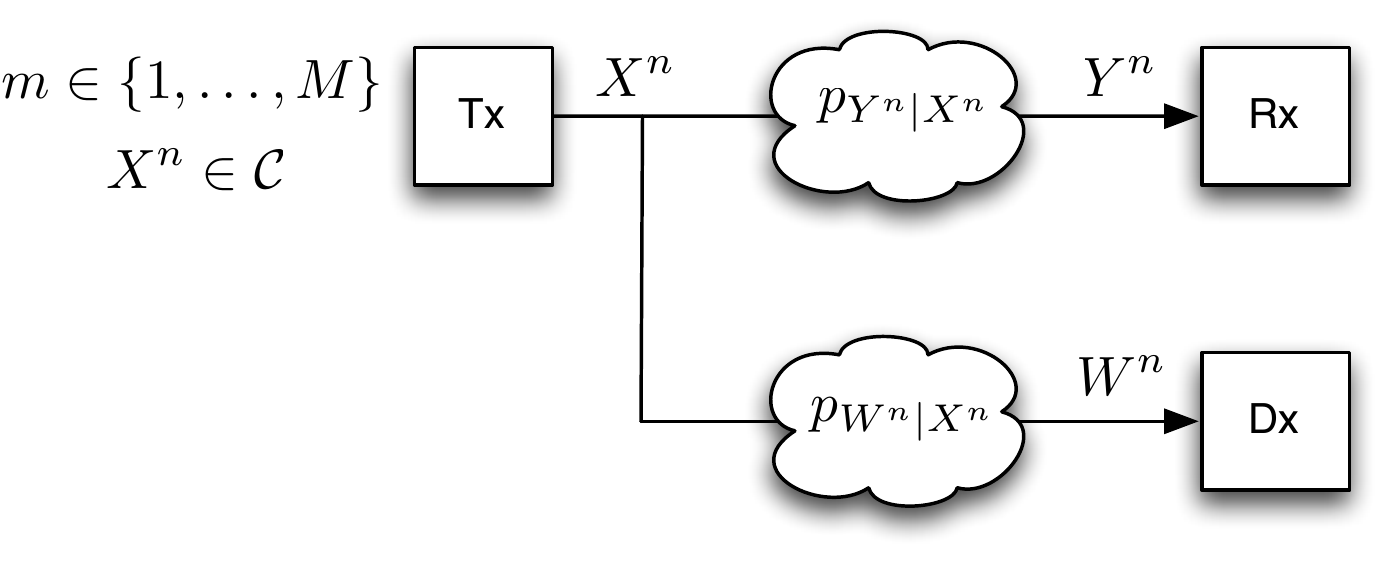}
\caption{The LPD communication scenario: the transmitter (Tx) sends
  message $m \in \{1, \ldots, M \}$ to the receiver (Rx) with LPD by
  an adversary with a detector (Dx).}
\label{fig:lpd_com_scenario}
\end{figure}

Consider the LPD communication scenario shown in
Fig.~\ref{fig:lpd_com_scenario} consisting of a transmitter (Tx), receiver (Rx) and detector (Dx).\footnote{In related works~\cite{Bash2013p1921,Wang2015isit,Wang2016p3493,Bloch2016p2334}, the Tx, Rx and Dx are referred to as "Alice", "Bob" and "Willie" respectively.} In this
scenario, the Tx wishes to send a message to the Rx with a LPD by the
Dx. Let $\mathcal{M} = (1,\ldots,M)$ denote the set of possible
messages the Tx can send.  For each message, the Tx constructs a
code book $\mathcal{C}$ consisting of $M$, $n$-length code words, where each code word, $\mat{x}_m = (x_1(m), \ldots, x_n(m))$, is generated independently
according to the joint density $p_{X^n}$ with support $\mathcal{X}^n$.
Further assume the symbols (or letters) of each code word are
independently and identically distributed (i.i.d.) so that $p_{X^n}
(\mat{x}) = \prod_{i=1}^{n} p_{X}(x_{i})$, $x_{i} \in \mathcal{X}$. To send message $m$, the Tx transmits code word $\mat{x}(m)$ to the Rx
who observes the corrupted version $\mat{y} = (y_1, \ldots, y_n) \in
\mathcal{Y}^n$ with probability $p_{Y^n | X^n} (\mat{y} | \mat{x}(m) )$. Given
knowledge of $\mathcal{C}$, the Rx decodes its observation outputting
the decision $\hat{m}$ and a decoding error occurs when $\hat{m} \neq m$.  When no message is transmitted, it is assumed
the Tx inputs $\mat{x}_0 = (x_0, \ldots, x_0) \in \mathcal{X}^n$ to the channel.
Using the terminology of~\cite{Bloch2016p2334}, $x_0$ is referred to as the innocent symbol, representing the case where the Tx expends no resources. 

Unfortunately for the Tx, the Dx also receives a corrupted version of
the code word, $\mat{w} = (w_1, \ldots, w_n) \in \mathcal{W}^n$ with
probability $p_{W^n|X^n}(\mat{w}|\mat{x}(m))$. The Dx does not know
$\mathcal{C}$, but knows the distribution used to construct it,
i.e. $p_{X}$.  Given this knowledge, the Dx attempts to decide whether
the input to its channel was either $\mat{x}_0$ or $\mat{x} \in
\mathcal{C}$.  It is assumed the Dx and Rx channels are memoryless and
conditionally independent, so that
$p_{Y^nW^n|X^n}(\mat{y},\mat{w}|\mat{x}) = \prod_{i=1}^{n}
p_{Y|X}(y_i|x_i) p_{W|X}(w_i|x_i)$.

As in~\cite{Bloch2016p2334}, it is assumed that $p_{W}$ is dominated by $p_{W|X=x_0}$ to exclude scenarios where the Dx would always detect the Tx's code word with non-vanishing probability, or never detect it.\footnote{In other words, let $\mathcal{W}_0 = \{w: p_{W|X}(w|x_0) = 0\}$, if $p_{W}$ is dominated by $p_{W|X=x_0}$ then $p_{W}(w) = 0$ for all $w\in \mathcal{W}_0$.}  Similarly it is assumed $p_{Y}$ is dominated by $p_{Y|X=x_0}$ to ensure the Rx does not have an unfair advantage over the Dx.

%
%
%

\section{Detection Error Probability} \label{sec:prob_det}

From its observation $\mat{w}$ the Dx must decide between two
hypotheses: $\mathcal{H}_0$, the input to its channel was $\mat{x}_0$;
and $\mathcal{H}_1$, the input to its channel was $\mat{x} \in
\mathcal{C}$. The probability of the Dx's observation conditioned on
$\mathcal{H}_0$ and $\mathcal{H}_1$ are given by
\begin{align}
\Pr \{ \mat{w}| \mathcal{H}_0 \} &= p_{W^n|X^n}(\mat{w} | \mat{x}_0) = \prod_{i=1}^{n} p_{W|X}(w_i|x_0) \label{eq:prob_w_h0}\\
\Pr \{ \mat{w}| \mathcal{H}_1 \} &= p_{W^n}(\mat{w}) = \prod_{i=1}^{n} \int_{\mathcal{X}} p_{X}(x) p_{W|X}(w_i|x) \, dx, \label{eq:prob_w_h1}
\end{align}
where~\eqref{eq:prob_w_h1} assumes the messages are equally
likely. Let $\alpha$ and $\beta$ denote the false and miss detection
probabilities respectively, i.e. $\alpha$ is the probability of
deciding $\mathcal{H}_1$ when $\mathcal{H}_0$ is true, and $\beta$ is
the probability of deciding $\mathcal{H}_0$ when $\mathcal{H}_1$ is
true. Further define $\alpha + \beta$ as the detection error
probability sum. Assume the Dx knows the statistics of its
observations and employs an optimal statistical
hypothesis test that minimises $\alpha + \beta$.  From~\cite[Theorem~13.1.1]{Lehmann2005book} for any such optimal test
\begin{equation}
\alpha + \beta = 1 - d_{\rm TV}(p_{W^n}, p_{W^n|X^n=\mat{x}_0}) \label{eq:det_err_sum}
\end{equation}
where $d_{\rm TV}(p,q) = \frac{1}{2} \int_{\mathcal{X}} |p(x) - q(x) |\, dx$ is the variational distance between densities $p(x)$ and $q(x)$ of support $\mathcal{X}$.  

The goal of the Tx is to force $\alpha + \beta$ to be close to $1$ so that the adversary's best statistical test is not much better than a blind one (i.e. a test that ignores the channel observation $\mat{w}$)~\cite{Bloch2016p2334}.  Toward this end, suppose the Tx must ensure $\alpha + \beta \geq 1 - \epsilon_{\rm det}$, where $0< \epsilon_{\rm det} \lll 1$ is close to zero.  Then from~\eqref{eq:det_err_sum}, the Tx's code book design must be constrained such that
\begin{equation}
d_{\rm TV} (p_{W^n}, p_{W^n|X^n=\mat{x}_0}) \leq \epsilon_{\rm det}. \label{eq:vardist_constraint}
\end{equation} %
In this paper, the constraint in~\eqref{eq:vardist_constraint} is referred to as a \emph{low probability of detection} constraint, i.e. a constraint on the adversary's probability of detection $\epsilon_{\rm det}$. Unfortunately, dealing with the total variation distance between two multivariate densities is problematic.  Instead, the works of~\cite{Bash2013p1921,Wang2016p3493,Bloch2016p2334} proceed to upper bound $d_{\rm TV} (p_{W^n}, p_{W^n|X^n=\mat{x}_0})$ in terms of the relative entropy via Pinsker's inequality~\cite[Theorem~2.33]{Yeung2002book} and then further approximate or weaken the bound using Taylor series expansion methods.  In particular, using bounds on the natural logarithm, Bloch~\cite{Bloch2016p2334} bounds the relative entropy in terms of the chi-squared distance plus other related terms.  A simpler approach is presented in the following lemma.
\begin{lemma} \label{lem:var_dist_ineq}
Define $p_{W^n}$ and $p_{W^n|X^n}$ as in~\eqref{eq:prob_w_h0} and~\eqref{eq:prob_w_h1} respectively. Then,
\begin{equation}
d_{\rm TV}(p_{W^n}, p_{W^n|X^n=\mat{x}_0}) \leq \frac{1}{2} \sqrt{ \lambertW_0^{-1}\left(  n \chi^2( p_{W} \parallel p_{W|X=x_0}) \right) } \label{eq:var_dist_ineq}
\end{equation}
where $\lambertW_{0}^{-1}(z) = ze^{z}$, $z \geq -1/e$, is the inverse of the Lambert-$\lambertW$ function on the principle branch~\cite{Corless1996} and $\chi^2(p \parallel q) \triangleq \int_{\mathcal{X}} \frac{(p(x) - q(x))^2}{q(x)} \, dx$ is the chi-squared distance.
\end{lemma}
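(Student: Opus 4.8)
The plan is to bound the total variation distance directly by the chi-squared distance via Cauchy--Schwarz, exploit the product structure to reduce everything to the single-letter chi-squared distance, and then chain two elementary exponential inequalities that together collapse to exactly the Lambert-$\lambertW$ form. No appeal to Pinsker's inequality or to logarithm/Taylor approximations is needed.

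First I would establish the single-distribution inequality $d_{\rm TV}(p,q) \le \frac{1}{2}\sqrt{\chi^2(p\parallel q)}$ by writing $|p-q| = (|p-q|/\sqrt{q})\,\sqrt{q}$ on the support of $q$ and applying Cauchy--Schwarz, together with $\int q = 1$. The assumed domination of $p_W$ by $p_{W|X=x_0}$ (inherited by the $n$-fold product measures) guarantees the integrand is well defined. Applying this with $p = p_{W^n}$ and $q = p_{W^n|X^n=\mat{x}_0}$ gives $d_{\rm TV}(p_{W^n}, p_{W^n|X^n=\mat{x}_0}) \le \frac{1}{2}\sqrt{\chi^2(p_{W^n}\parallel p_{W^n|X^n=\mat{x}_0})}$.

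Next I would tensorize the chi-squared distance. Since both arguments are $n$-fold products of identical single-letter densities, the integral $\int (\prod_i p_i)^2/\prod_i q_i$ factorizes over coordinates, so $1 + \chi^2(p_{W^n}\parallel p_{W^n|X^n=\mat{x}_0}) = \bigl(1 + \chi^2(p_W \parallel p_{W|X=x_0})\bigr)^n$. Writing $\chi^2 \triangleq \chi^2(p_W \parallel p_{W|X=x_0})$ for brevity, the multivariate chi-squared distance is therefore exactly $(1+\chi^2)^n - 1$.

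The crux is to bound $(1+\chi^2)^n - 1$ by $\lambertW_0^{-1}(n\chi^2) = n\chi^2 e^{n\chi^2}$. I would chain $1+x \le e^x$, which gives $(1+\chi^2)^n \le e^{n\chi^2}$ and hence $(1+\chi^2)^n - 1 \le e^{n\chi^2}-1$, with the elementary inequality $1 - e^{-t} \le t$ evaluated at $t = n\chi^2$, which rearranges to $e^{n\chi^2} - 1 \le n\chi^2 e^{n\chi^2}$. Combining the two yields $(1+\chi^2)^n - 1 \le n\chi^2 e^{n\chi^2} = \lambertW_0^{-1}(n\chi^2)$, and substituting back into the Cauchy--Schwarz bound gives~\eqref{eq:var_dist_ineq}. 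The only genuinely delicate point is recognizing that the definition $\lambertW_0^{-1}(z) = z e^{z}$ is tailored precisely to absorb the step $e^{n\chi^2}-1 \le n\chi^2 e^{n\chi^2}$; once that is spotted, every remaining inequality is standard and tight in the small-$\chi^2$ covert regime of interest, so no further weakening is required.
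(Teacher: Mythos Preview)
Your proof is correct and follows essentially the same route as the paper: bound total variation by the square root of the multivariate chi-squared distance via Cauchy--Schwarz (the paper phrases it as H\"older), then control the $n$-fold chi-squared by $n\chi^2 e^{n\chi^2}$. The only difference is that the paper cites \cite[Lemma~3.3.10]{reiss1989approximate} for the second step, whereas you supply the elementary derivation---exact tensorization $1+\chi^2(p_{W^n}\parallel p_{W^n|X^n=\mat{x}_0})=(1+\chi^2)^n$ followed by $(1+\chi^2)^n-1\le e^{n\chi^2}-1\le n\chi^2 e^{n\chi^2}$---so your argument is in fact more self-contained.
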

\begin{proof}
By the definition of variational distance,
\begin{align}
d_{\rm TV}(p_{W^n}, p_{W^n|X^n=\mat{x}_0}) &= \frac{1}{2} \int_{\mathcal{W}^n} \left| p_{W^n}(\mat{w}) -  p_{W^n|X^n}(\mat{w} | \mat{x}_0) \right| \, d \mat{w} \notag \\
&= \frac{1}{2} \int_{\mathcal{W}^n} \frac{\left| p_{W^n}(\mat{w}) -  p_{W^n|X^n}(\mat{w} | \mat{x}_0) \right|}{\sqrt{p_{W^n|X^n}(\mat{w} | \mat{x}_0) }} \sqrt{p_{W^n|X^n}(\mat{w} | \mat{x}_0) }   \, d \mat{w} \notag \\
&\leq \frac{1}{2}  \sqrt{ \int_{\mathcal{W}^n} \frac{\left( p_{W^n}(\mat{w}) -  p_{W^n|X^n}(\mat{w} | \mat{x}_0) \right)^2}{p_{W^n|X^n}(\mat{w} | \mat{x}_0) } \, d\mat{w} }  \sqrt{ \int_{\mathcal{W}^n} p_{W^n|X^n}(\mat{w} | \mat{x}_0) \, d\mat{w}}   \notag \\
&= \frac{1}{2} \sqrt{ \chi^2( p_{W^n} \parallel p_{W^n|X^n=\mat{x}_0})}, \label{eq:vardist_xidist_work}
\end{align}
where the second line follows since $p_{W}$ is dominated by $p_{W|X=x_0}$ by assumption and the third line results from application of H\"{o}lder's inequality.

Again, since $p_{W}$ is dominated by $p_{W|X=x_0}$ by assumption, application of \cite[Lemma~3.3.10]{reiss1989approximate} yeilds
\begin{align}
\chi^2( p_{W^n} \parallel p_{W^n|X^n=\mat{x}_0}) &\leq  n \chi^2( p_{W} \parallel p_{W|X=x_0})   e^{ n  \chi^2( p_{W} \parallel p_{W|X=x_0}) } \notag \\
&= \lambertW_0^{-1} \left( n \chi^2( p_{W} \parallel p_{W|X=x_0}) \right). \label{eq:reiss_lemma}
\end{align}
Thus combining~\eqref{eq:vardist_xidist_work} and~\eqref{eq:reiss_lemma} results in~\eqref{eq:var_dist_ineq} as stated in the lemma. 
\end{proof}
Application of Lemma~\ref{lem:var_dist_ineq} results in the following corollary.
\begin{corollary}
Suppose the Tx must constrain $\alpha + \beta \geq 1 - \epsilon_{\rm det}$ for $0 < \epsilon_{\rm det} < 1$. This constraint can be satisfied by ensuring
\begin{equation}
\chi^2(p_{W} \parallel p_{W|X=x_0}) \leq \frac{4 \xi^2 \epsilon_{\rm det}^2}{n} = \frac{1}{n} \left( 4 \epsilon^2_{\rm det} + \mathcal{O}(\epsilon^4_{\rm det}) \right) \label{eq:chi2_constraint}
\end{equation}
where $\xi = e^{-\frac{1}{2} \lambertW_0( 4 \epsilon_{\rm det}^2 )}$.
\end{corollary}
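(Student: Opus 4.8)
The plan is to chain together the detection-error identity \eqref{eq:det_err_sum}, the low probability of detection constraint \eqref{eq:vardist_constraint}, and Lemma~\ref{lem:var_dist_ineq}, and then invert the resulting inequality for the chi-squared distance. First I would note that by \eqref{eq:det_err_sum} the requirement $\alpha + \beta \geq 1 - \epsilon_{\rm det}$ is precisely the variational-distance constraint $d_{\rm TV}(p_{W^n}, p_{W^n|X^n=\mat{x}_0}) \leq \epsilon_{\rm det}$ appearing in \eqref{eq:vardist_constraint}. Since Lemma~\ref{lem:var_dist_ineq} bounds this variational distance above by $\tfrac{1}{2}\sqrt{\lambertW_0^{-1}(n\chi^2(p_W \parallel p_{W|X=x_0}))}$, it suffices to drive this upper bound below $\epsilon_{\rm det}$: forcing the bound down is a sufficient condition that guarantees the detection constraint, which is exactly the flavour of the claim (``can be satisfied by ensuring'').

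Next I would solve the sufficient condition $\tfrac{1}{2}\sqrt{\lambertW_0^{-1}(n\chi^2)} \leq \epsilon_{\rm det}$ for $\chi^2$. Squaring yields $\lambertW_0^{-1}(n\chi^2) \leq 4\epsilon_{\rm det}^2$, and since $\lambertW_0^{-1}(z) = ze^z$ is strictly increasing on $z \geq 0$ (so the principal branch $\lambertW_0$ is monotone on the relevant nonnegative range), applying $\lambertW_0$ to both sides gives $n\chi^2 \leq \lambertW_0(4\epsilon_{\rm det}^2)$, i.e. $\chi^2(p_W \parallel p_{W|X=x_0}) \leq \lambertW_0(4\epsilon_{\rm det}^2)/n$.

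It remains to cast $\lambertW_0(4\epsilon_{\rm det}^2)$ into the two forms stated. For the closed form, I would invoke the defining identity $\lambertW_0(x)\,e^{\lambertW_0(x)} = x$ at $x = 4\epsilon_{\rm det}^2$, which rearranges to $e^{-\lambertW_0(4\epsilon_{\rm det}^2)} = \lambertW_0(4\epsilon_{\rm det}^2)/(4\epsilon_{\rm det}^2)$; substituting $\xi^2 = e^{-\lambertW_0(4\epsilon_{\rm det}^2)}$ then shows $4\xi^2\epsilon_{\rm det}^2 = \lambertW_0(4\epsilon_{\rm det}^2)$, matching the first expression. For the asymptotic form, I would use the Maclaurin series $\lambertW_0(x) = x - x^2 + \tfrac{3}{2}x^3 - \cdots$ so that $\lambertW_0(4\epsilon_{\rm det}^2) = 4\epsilon_{\rm det}^2 + \mathcal{O}(\epsilon_{\rm det}^4)$ as $\epsilon_{\rm det}\to 0$. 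I do not expect a serious obstacle here: the only points requiring care are checking that the argument $4\epsilon_{\rm det}^2$ lies in the domain of the principal branch (immediate, since $4\epsilon_{\rm det}^2 \geq 0 > -1/e$) and that the monotone inversion is valid over the nonnegative range, so that the derived bound on $\chi^2$ genuinely implies the variational-distance constraint rather than merely being consistent with it.
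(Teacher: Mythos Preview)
Your proposal is correct and follows essentially the same route as the paper: reduce the detection constraint to the variational-distance bound via \eqref{eq:det_err_sum}, apply Lemma~\ref{lem:var_dist_ineq}, invert using the monotonicity of $\lambertW_0$, and rewrite $\lambertW_0(4\epsilon_{\rm det}^2)$ as $4\xi^2\epsilon_{\rm det}^2$ through the defining identity of the Lambert-$\lambertW$ function. Your write-up is in fact slightly more explicit than the paper's, since you spell out the Maclaurin expansion justifying the $\mathcal{O}(\epsilon_{\rm det}^4)$ term and the algebraic identity linking $\xi^2$ to $\lambertW_0(4\epsilon_{\rm det}^2)$.
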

\begin{proof}
From~\eqref{eq:det_err_sum}, to constrain $\alpha + \beta \geq 1 - \epsilon_{\rm det}$ requires $d_{\rm TV}(p_{W^n}, p_{W^n|X^n=\mat{x}_0}) \leq \epsilon_{\rm det}$. Thus using Lemma~\ref{lem:var_dist_ineq} this can be satisfied by ensuring
\begin{align}
\frac{1}{2} \sqrt{ \lambertW_0^{-1}\left(  n \chi^2( p_{W} \parallel p_{W|X=x_0}) \right) } \leq \epsilon_{\rm det}.
\end{align}
Since $\lambertW_0(x)$ is a monotonically increasing function of $x$ for $x > -1/e$ then one may write
\begin{align}
\chi^2( p_{W} \parallel p_{W|X=x_0})   &\leq \frac{1}{n}\lambertW_0 \left( 4\epsilon_{\rm det}^2 \right) \notag \\
&= \frac{4}{n} \epsilon^2_{\rm det} e^{- \lambertW_0 \left( 4\epsilon_{\rm det}^2 \right)}
\end{align}
\end{proof}

To avoid detection, it is clear from the above analysis that the Tx should use a code book whose code words are close to $\mat{x}_0$.  One approach, in a similar vein as~\cite{Wang2015isit,Wang2016p3493,Bloch2016p2334}, is to use a sparse signalling scheme, i.e. the Tx generates
code words such that symbols $x \neq x_0$ occur very infrequently. More formally, in this paper the sparse signalling density
is defined as follows.
\begin{definition}[Sparse signalling density] \label{def:sparse_sig}
Let $p_{\tilde{X}}(x)$ denote an arbitrary kernel density. The sparse
signalling density is defined as
\begin{equation}
p_{X}(x;\tau, p_{\tilde{X}}) = (1-\tau) \delta(x-x_0) + \tau p_{\tilde{X}} (x), \label{eq:sparse_signalling_density}
\end{equation}
where $0 \leq \tau \leq 1$ is the sparseness factor and $\delta(x)$ denotes the Dirac-delta function.
\end{definition}
Substituting~\eqref{eq:sparse_signalling_density} into~\eqref{eq:chi2_constraint} translates the detection error
probability sum constraint to a constraint on the sparseness factor, i.e. the constraint $\alpha + \beta \geq 1 - \epsilon_{\rm det}$ can be satisfied using a sparse signalling density with
\begin{equation}
\tau \leq   \frac{ 2 \xi \epsilon_{\rm det} }{ \sqrt{n \chi^2( p_{\tilde{W}} \parallel p_{W|X=x_0})}}, \label{eq:tau_constraint}
\end{equation}
where $p_{\tilde{W}}(w) = \int_{\mathcal{X}} p_{\tilde{X}}(x) p_{W|X}(w|x) \, dx$ is the density of the Dx's channel outputs induced by the Kernel density $p_{\tilde{X}}$.

%
%
\section{LPD Achievability Bound} \label{sec:ach_lpd_info}
The achievability results of previous works~\cite{Bash2013p1921,Wang2015isit,Wang2016p3493,Bloch2016p2334} on LPD communication are only applicable in the asymptotic large block length regime.  The approach of Wang~\etal~\cite{Wang2015isit,Wang2016p3493} is based on one-shot achievability bounds~\cite{Wang2009isit,Polyanskiy2010p2307}.  While, Bloch's approach~\cite{Bloch2016p2334}, uses suitably modified typical sets to enable the application of concentration inequalities.  This section takes a much simpler approach to the aforementioned works, which not only proves the asymptotic achievability of the SRL, but also yields additional insights on the number of achievable bits in the finite block length regime. Central to this approach is Gallager's coding theorem~\cite{Gallager1965p3,Gallager1968book} stated as follows.
\begin{theorem} [Gallager's coding theorem~\cite{Gallager1965p3,Gallager1968book}] \label{thrm:gallager}
For any $p_{X^n}(\mat{x}) = \prod_{i=1}^{n} p_{X}(x_i)$ and $0 \leq
\rho \leq 1$ there exists a code with $M$, $n$-length code words with
average probability of decoding error, $\epsilon_{\rm dec} > 0$, such
that
\begin{align}
\epsilon_{\rm dec} \leq  \exp\left( -n \left[  \mathcal{E}_{0}(\rho,p_{X}, p_{Y|X}) -  \frac{\rho}{n}\log M \right] \right), \label{eq:gallager_iid_memoryless}
\end{align}
where
\begin{align}
 & \mathcal{E}_{0}(\rho,p_{X}, p_{Y|X})  = -\log \int_{\mathcal{Y}} \left\{ \int_{\mathcal{X}} p_{X}(x) \left[ p_{Y |X}(y|x) \right]^{\frac{1}{1+\rho}} \, d x \right\}^{1+\rho} \, d y. \label{eq:gal_exponent}
\end{align}
\end{theorem}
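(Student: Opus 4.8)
The plan is to prove \eqref{eq:gallager_iid_memoryless} by the classical random coding argument: draw the $M$ codewords independently according to $p_{X^n}$, bound the maximum-likelihood decoding error probability averaged over this ensemble, and conclude that at least one code in the ensemble meets the average.

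First I would fix that message $m$ is sent, codeword $\mat{x}_m$ transmitted and $\mat{y}$ received, and analyse a maximum-likelihood decoder, for which an error requires some competing codeword $\mat{x}_{m'}$, $m'\neq m$, to satisfy $p_{Y^n|X^n}(\mat{y}|\mat{x}_{m'}) \geq p_{Y^n|X^n}(\mat{y}|\mat{x}_m)$. The central device is Gallager's bound on the error indicator: for any $0\leq\rho\leq 1$,
\[
\mathbbm{1}[\text{error}] \leq \left[ \sum_{m'\neq m} \left( \frac{p_{Y^n|X^n}(\mat{y}|\mat{x}_{m'})}{p_{Y^n|X^n}(\mat{y}|\mat{x}_m)} \right)^{\frac{1}{1+\rho}} \right]^{\rho},
\]
which holds because an error forces at least one summand to be at least $1$, whence the bracketed sum raised to the nonnegative power $\rho$ is at least $1$; the exponent $1/(1+\rho)$ is the choice that ultimately yields the symmetric form of $\mathcal{E}_0$.

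Next I would average over the ensemble. Taking the expectation over the competing codewords and applying Jensen's inequality — valid precisely because $x\mapsto x^{\rho}$ is concave for $\rho\leq 1$ — pushes the expectation inside the outer power, and independence of the codewords factors the sum into $M-1$ identical terms. Completing the average over $\mat{x}_m$ and $\mat{y}$ against $p_{X^n}(\mat{x}_m)\,p_{Y^n|X^n}(\mat{y}|\mat{x}_m)$, the surplus factor $p_{Y^n|X^n}(\mat{y}|\mat{x}_m)^{-\rho/(1+\rho)}$ merges with $p_{Y^n|X^n}(\mat{y}|\mat{x}_m)$ to restore the symmetric integrand, leaving the ensemble-average bound
\[
\bar{\epsilon}_{\rm dec} \leq (M-1)^{\rho} \int_{\mathcal{Y}^n} \left[ \int_{\mathcal{X}^n} p_{X^n}(\mat{x}) \, p_{Y^n|X^n}(\mat{y}|\mat{x})^{\frac{1}{1+\rho}} \, d\mat{x} \right]^{1+\rho} \, d\mat{y}.
\]

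Finally I would invoke the product structure $p_{X^n}(\mat{x})=\prod_{i} p_{X}(x_i)$ and $p_{Y^n|X^n}(\mat{y}|\mat{x})=\prod_i p_{Y|X}(y_i|x_i)$, which factors the $n$-dimensional integral into the $n$th power of the single-letter integral appearing in \eqref{eq:gal_exponent}, i.e.\ $e^{-n\mathcal{E}_0(\rho,p_X,p_{Y|X})}$; bounding $(M-1)^{\rho}\leq M^{\rho}=e^{\rho\log M}$ then reproduces \eqref{eq:gallager_iid_memoryless} for the ensemble average, and the existence of a single code meeting it follows since no code can perform worse than the average. The step requiring the most care — and the main obstacle — is the Gallager indicator bound together with the Jensen step: it is here that the restriction $0\leq\rho\leq 1$ is indispensable (concavity fails for $\rho>1$), and one must verify that a maximum-likelihood error genuinely forces some likelihood ratio to exceed unity so that the bracketed quantity dominates the indicator.
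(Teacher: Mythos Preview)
Your proposal is correct and follows the classical random coding argument exactly as in Gallager's original derivation. Note, however, that the paper itself does not supply a proof of this theorem: it is stated with attribution to~\cite{Gallager1965p3,Gallager1968book} and used as a black box, so there is no ``paper's own proof'' to compare against beyond the cited sources, which your outline faithfully reproduces.

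One small wording slip at the end: you write ``no code can perform worse than the average,'' but the argument you want is the opposite direction --- since the ensemble average satisfies the bound, \emph{at least one} code in the ensemble must have error probability no larger than that average (otherwise every code would exceed the mean, a contradiction). The conclusion is right; only the phrasing is inverted.
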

The maximisation of the square-bracketed term in~\eqref{eq:gallager_iid_memoryless} over $p_X$ and $\rho$ is Gallager's well known \emph{error exponent} and describes how rapid the average probability of decoding error reduces with increasing block length~\cite{Gallager1965p3,Gallager1968book}.  When sparse signalling is employed,~\eqref{eq:gal_exponent} is lower bounded as follows.
\begin{lemma} \label{lem:sparse_exponent_minkowski}
Suppose the input density is sparse as defined by~\eqref{eq:sparse_signalling_density} with $0 \leq \tau < 1$, then for $0 \leq \rho \leq 1$,
\begin{align}
\mathcal{E}_{0} (\rho,p_{X}, p_{Y|X}) &\geq - (1+\rho) \log \left[   1-\tau \left(1 -  e^{- \frac{1}{1+\rho} \mathcal{E}_0(\rho,p_{\tilde{X}},p_{Y|X})} \right)  \right]  \label{eq:E0_tau_log} \\
&\geq (1+\rho) \tau \left(1 -  e^{- \frac{1}{1+\rho} \mathcal{E}_0(\rho,p_{\tilde{X}},p_{Y|X})} \right) \label{eq:E0_tau}
\end{align}
\end{lemma}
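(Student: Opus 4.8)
The plan is to substitute the sparse signalling density~\eqref{eq:sparse_signalling_density} directly into the definition of Gallager's function~\eqref{eq:gal_exponent} and then recognise the resulting inner integral as an $L^{1+\rho}$ norm, to which Minkowski's inequality applies. Writing $p_X(x) = (1-\tau)\delta(x-x_0) + \tau p_{\tilde{X}}(x)$ and invoking the sifting property of the Dirac delta, the inner integral over $\mathcal{X}$ splits into
\begin{equation}
\int_{\mathcal{X}} p_X(x)\left[p_{Y|X}(y|x)\right]^{\frac{1}{1+\rho}}\,dx = (1-\tau)\,a(y) + \tau\,b(y),
\end{equation}
where I abbreviate $a(y) = \left[p_{Y|X}(y|x_0)\right]^{\frac{1}{1+\rho}}$ for the innocent-symbol contribution and $b(y) = \int_{\mathcal{X}} p_{\tilde{X}}(x)\left[p_{Y|X}(y|x)\right]^{\frac{1}{1+\rho}}\,dx$ for the kernel contribution. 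Consequently $\mathcal{E}_0(\rho,p_X,p_{Y|X}) = -\log \| (1-\tau)a + \tau b \|_{1+\rho}^{1+\rho}$, where $\| f \|_{1+\rho} = \left( \int_{\mathcal{Y}} |f|^{1+\rho}\,dy \right)^{1/(1+\rho)}$.

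Since $1+\rho \ge 1$ and both $a$ and $b$ are non-negative, the substantive step is to apply Minkowski's (triangle) inequality to obtain $\| (1-\tau)a + \tau b \|_{1+\rho} \le (1-\tau)\| a \|_{1+\rho} + \tau\| b \|_{1+\rho}$. The two norms then evaluate cleanly: $\| a \|_{1+\rho}^{1+\rho} = \int_{\mathcal{Y}} p_{Y|X}(y|x_0)\,dy = 1$ because $p_{Y|X}(\cdot|x_0)$ is a density, so $\| a \|_{1+\rho} = 1$; and $\| b \|_{1+\rho}^{1+\rho}$ is exactly $\exp\bigl(-\mathcal{E}_0(\rho,p_{\tilde{X}},p_{Y|X})\bigr)$ by the definition~\eqref{eq:gal_exponent} applied to the kernel, so $\| b \|_{1+\rho} = e^{-\frac{1}{1+\rho}\mathcal{E}_0(\rho,p_{\tilde{X}},p_{Y|X})}$. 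Substituting gives $\| (1-\tau)a + \tau b \|_{1+\rho} \le 1 - \tau\bigl(1 - e^{-\frac{1}{1+\rho}\mathcal{E}_0(\rho,p_{\tilde{X}},p_{Y|X})}\bigr)$. Raising both sides to the power $1+\rho$ (monotone on the non-negative reals) and then taking $-\log$ (which reverses the inequality) yields~\eqref{eq:E0_tau_log}.

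For the second, weaker bound~\eqref{eq:E0_tau}, the plan is to apply the elementary estimate $-\log(1-u) \ge u$, valid for $0 \le u < 1$, with $u = \tau\bigl(1 - e^{-\frac{1}{1+\rho}\mathcal{E}_0(\rho,p_{\tilde{X}},p_{Y|X})}\bigr)$. Here I would note that Gallager's function is non-negative, $\mathcal{E}_0(\rho,p_{\tilde{X}},p_{Y|X}) \ge 0$, so that $0 \le 1 - e^{-\frac{1}{1+\rho}\mathcal{E}_0(\rho,p_{\tilde{X}},p_{Y|X})} \le 1$, and combined with $0 \le \tau < 1$ this guarantees $0 \le u < 1$ so the logarithm bound is in force.

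I anticipate the only delicate points are bookkeeping rather than conceptual: ensuring the chain of monotone operations (raising to the $(1+\rho)$ power, then applying $-\log$) is assembled in the correct order so the inequality lands the right way, and verifying that the argument of the logarithm in the second step remains inside $[0,1)$ so that $-\log(1-u) \ge u$ legitimately applies. The genuinely decisive move is recognising the inner integral as an $L^{1+\rho}$ norm, so that Minkowski furnishes the clean separation of the innocent-symbol and kernel contributions; everything downstream is a direct evaluation.
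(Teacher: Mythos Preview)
Your proposal is correct and follows essentially the same route as the paper: substitute the sparse density into~\eqref{eq:gal_exponent}, interpret the outer integral as an $L^{1+\rho}$ norm, apply Minkowski's inequality to separate the innocent and kernel contributions, and evaluate the two resulting norms exactly as you describe. Your treatment is in fact slightly more complete, since you explicitly justify the step from~\eqref{eq:E0_tau_log} to~\eqref{eq:E0_tau} via $-\log(1-u)\ge u$ and verify $u\in[0,1)$, whereas the paper's appendix stops after establishing~\eqref{eq:E0_tau_log}.
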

\begin{proof}
See Appendix~\ref{app:sparse_exponent_minkowski_proof}.
\end{proof}
From Lemma~\ref{lem:sparse_exponent_minkowski}, it can be seen that the lower bound~\eqref{eq:E0_tau} is proportional the sparseness factor $\tau$ times a factor that is dependent on the $\mathcal{E}_0$ value of the underlying kernel density.  Moreover, when $\tau = 1$,~\eqref{eq:E0_tau_log} reverts to~\eqref{eq:gal_exponent}.  On the other hand if $\tau = 0$, or $ \mathcal{E}_0(\rho,p_{\tilde{X}},p_{Y|X}) = 0$ then the lower bound is also zero. Since~\eqref{eq:E0_tau} is proportional to the sparseness factor $\tau$ and from~\eqref{eq:tau_constraint} $\tau$ is $\mathcal{O}(1/\sqrt{n})$ to satisfy the LPD constraint, then the following theorem is proved.
\begin{theorem} \label{thrm:gallager_tau}
Suppose the Tx employs the sparse signalling scheme~\eqref{eq:sparse_signalling_density} with sparseness factor~\eqref{eq:tau_constraint}. Then there
exists a code with $M$, $n$-length code words with an average decoding
error probability not exceeding $\epsilon_{\rm dec}$ and probability
of detection not exceeding $\epsilon_{\rm det}$ such that for any $0
\leq \rho \leq 1$,
\begin{align}
& \log_2 M \geq   \epsilon_{\rm det} \sqrt{n}  L(\rho) + \frac{1}{\rho} \log_2 \epsilon_{\rm dec}. \label{eq:ach_covert_bits}
\end{align}
where
\begin{align}
L(\rho)   &=   \frac{2 \xi }{\log 2} \left(\frac{1+\rho}{\rho} \right)  \left( \frac{ 1 - e^{- \frac{1}{1+\rho} \mathcal{E}_0(\rho,p_{\tilde{X}},p_{Y|X})}}{\sqrt{ \chi^2(p_{\tilde{W}} \parallel p_{W|X=x_0}) }} \right)  \label{eq:L_def}
\end{align}
\end{theorem}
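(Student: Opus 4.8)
The plan is to invert Gallager's bound from Theorem~\ref{thrm:gallager} to obtain a lower bound on $\log M$, and then to inject the two results already in hand: the exponent lower bound of Lemma~\ref{lem:sparse_exponent_minkowski} and the admissible range for the sparseness factor in~\eqref{eq:tau_constraint}. First I would take the natural logarithm of~\eqref{eq:gallager_iid_memoryless} to get $\log \epsilon_{\rm dec} \leq -n\, \mathcal{E}_0(\rho, p_X, p_{Y|X}) + \rho \log M$, and rearrange this to isolate the message count:
\begin{align}
\log M \geq \frac{1}{\rho} \log \epsilon_{\rm dec} + \frac{n}{\rho}\, \mathcal{E}_0(\rho, p_X, p_{Y|X}). \notag
\end{align}
Since the right-hand side is increasing in $\mathcal{E}_0$, replacing $\mathcal{E}_0$ by its lower bound~\eqref{eq:E0_tau} from Lemma~\ref{lem:sparse_exponent_minkowski} preserves the inequality and produces a bound whose exponent term is proportional to the sparseness factor $\tau$.

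Next, because the resulting expression is increasing in $\tau$, the strongest achievable bound on $M$ follows from taking $\tau$ as large as the LPD constraint allows, i.e. setting $\tau$ equal to the right-hand side of~\eqref{eq:tau_constraint}. The key simplification is that the factor of $n$ inherited from Gallager's exponent multiplies the $1/\sqrt{n}$ carried by the maximal admissible $\tau$, collapsing to $\sqrt{n}$; this is precisely where the square-root scaling emerges. Collecting the constants $2\xi$, the ratio $(1+\rho)/\rho$, the chi-squared normalisation $\sqrt{\chi^2(p_{\tilde{W}} \parallel p_{W|X=x_0})}$, and the kernel exponent factor $1 - e^{-\frac{1}{1+\rho}\mathcal{E}_0(\rho,p_{\tilde{X}},p_{Y|X})}$ then reproduces $L(\rho)$ as defined in~\eqref{eq:L_def}, up to a logarithm-base conversion.

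Finally I would convert both terms from nats to bits by dividing throughout by $\log 2$, which turns $\frac{1}{\rho}\log\epsilon_{\rm dec}$ into $\frac{1}{\rho}\log_2\epsilon_{\rm dec}$ and supplies the $1/\log 2$ appearing in $L(\rho)$, yielding exactly~\eqref{eq:ach_covert_bits}. I expect no genuine analytical obstacle here: the argument is a direct chain of substitutions into already-established inequalities. The only points requiring care are the two monotonicity justifications, namely that the bound is increasing in both $\mathcal{E}_0$ and $\tau$, so that substituting the Lemma~\ref{lem:sparse_exponent_minkowski} lower bound and the maximal $\tau$ each tightens rather than violates the chain, together with the observation that any $\tau$ satisfying~\eqref{eq:tau_constraint} automatically meets the detection constraint $\alpha+\beta \geq 1-\epsilon_{\rm det}$, so the resulting codebook is simultaneously reliable and covert as claimed.
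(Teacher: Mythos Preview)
Your proposal is correct and follows essentially the same approach as the paper: the paper's proof is little more than the one-line remark that~\eqref{eq:E0_tau} is linear in $\tau$ and that~\eqref{eq:tau_constraint} forces $\tau = \mathcal{O}(1/\sqrt{n})$, after which substitution into Gallager's bound yields~\eqref{eq:ach_covert_bits}. Your write-up simply makes explicit the inversion of~\eqref{eq:gallager_iid_memoryless}, the two monotonicity checks, and the nats-to-bits conversion that the paper leaves implicit.
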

From~\eqref{eq:ach_covert_bits}, some interesting
insights can be made.  Firstly, for large block lengths the achievability bound is dominated by the first term of~\eqref{eq:ach_covert_bits}, and as expected from the analyses of~\cite{Bash2013p1921,Wang2015isit,Wang2016p3493,Bloch2016p2334}, scales asymptotically with $\mathcal{O}(\sqrt{n})$ as a consequence of the SRL. The quantity $L(\rho)$ is of a similar form as ~\cite[eq.~(28)]{Wang2016p3493}, but now takes into account both the Rx and Dx's channel statistics, whereas~\cite[eq.~(28)]{Wang2016p3493} assumes the Dx has the same observations as the Rx.  In the asymptotic large block length regime one has the following corollary.
\begin{corollary} \label{cor:L_asymp}
\begin{align}
\lim_{n \rightarrow \infty} & \frac{\log_2 M}{ \epsilon_{\rm det} \sqrt{n}}  \geq \frac{2 \xi }{\log 2} \frac{I(\tilde{X};Y)}{\sqrt{\chi^2(p_{\tilde{W}} \parallel p_{W|X=x_0})}} \label{eq:L_asymptotic}
\end{align}
where $I(X;Y) \triangleq \int_{\mathcal{X}} \int_{\mathcal{Y}} p_{XY}(x,y) \log \frac{p_{XY}(x,y)}{p_{X}(x) p_{Y}(y)} \, dy \, dx$ is the mutual information between random variables $X$ and $Y$~\cite[Sec~2.3]{Cover2006book}.
\end{corollary}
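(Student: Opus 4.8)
The plan is to derive the corollary directly from the finite block length bound of Theorem~\ref{thrm:gallager_tau} by taking iterated limits, first in $n$ and then in $\rho$. Starting from~\eqref{eq:ach_covert_bits}, I would divide both sides by $\epsilon_{\rm det}\sqrt{n}$ to obtain, for any fixed $\rho\in(0,1]$,
\[
\frac{\log_2 M}{\epsilon_{\rm det}\sqrt{n}} \geq L(\rho) + \frac{\log_2 \epsilon_{\rm dec}}{\rho\,\epsilon_{\rm det}\sqrt{n}}.
\]
Since $L(\rho)$ is independent of $n$ and $\log_2\epsilon_{\rm dec}$ is a fixed (negative) constant, the second term vanishes as $n\to\infty$, giving $\liminf_{n\to\infty}\frac{\log_2 M}{\epsilon_{\rm det}\sqrt{n}}\geq L(\rho)$. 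As the left-hand side does not involve $\rho$, this holds simultaneously for every $\rho\in(0,1]$, so I may pass to $\lim_{\rho\to0^+}L(\rho)$ on the right without any uniformity requirement in $n$.

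It then remains to evaluate $\lim_{\rho\to0^+}L(\rho)$. Only the factor $g(\rho) := \left(\frac{1+\rho}{\rho}\right)\bigl(1 - e^{-\frac{1}{1+\rho}\mathcal{E}_0(\rho,p_{\tilde{X}},p_{Y|X})}\bigr)$ depends on $\rho$, since the constant $\xi=e^{-\frac12\lambertW_0(4\epsilon_{\rm det}^2)}$ and the factor $\sqrt{\chi^2(p_{\tilde{W}}\parallel p_{W|X=x_0})}$ are both $\rho$-free. I would invoke the two standard properties of Gallager's exponent, namely $\mathcal{E}_0(0,p_{\tilde{X}},p_{Y|X})=0$ and $\frac{\partial}{\partial\rho}\mathcal{E}_0(\rho,p_{\tilde{X}},p_{Y|X})\big|_{\rho=0}=I(\tilde{X};Y)$, the latter obtained by differentiating~\eqref{eq:gal_exponent} under the integral sign. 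A first-order expansion then gives $\mathcal{E}_0(\rho,p_{\tilde{X}},p_{Y|X})=I(\tilde{X};Y)\,\rho+o(\rho)$, hence $1-e^{-\frac{1}{1+\rho}\mathcal{E}_0(\rho,p_{\tilde{X}},p_{Y|X})}=I(\tilde{X};Y)\,\rho+o(\rho)$ and $g(\rho)=\left(\frac1\rho+1\right)\bigl(I(\tilde{X};Y)\,\rho+o(\rho)\bigr)\to I(\tilde{X};Y)$. Substituting back recovers exactly the right-hand side of~\eqref{eq:L_asymptotic}.

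The main obstacle is establishing the derivative identity $\frac{\partial}{\partial\rho}\mathcal{E}_0\big|_{\rho=0}=I(\tilde{X};Y)$ and justifying the interchange of differentiation and integration in~\eqref{eq:gal_exponent}; this is precisely where the mutual information emerges from the error exponent, and thus where the finite-length bound connects to the asymptotic square-root-law constant. A secondary care point is the order of limits: taking $n\to\infty$ first, with $\rho$ fixed and bounded away from $0$, renders the $1/\rho$ coefficient harmless, and only afterwards is $\rho\to0^+$ applied. Because the normalised rate on the left is $\rho$-free, the optimisation over $\rho\in(0,1]$ collapses to this one-sided limit, so no delicate joint scaling of $\rho$ with $n$ is required.
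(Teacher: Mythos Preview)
Your proposal is correct and follows essentially the same route as the paper: divide~\eqref{eq:ach_covert_bits} by $\epsilon_{\rm det}\sqrt{n}$, let $n\to\infty$ to kill the $\frac{1}{\rho}\log_2\epsilon_{\rm dec}$ term, and then send $\rho\to0^+$ in $L(\rho)$ using $\partial_\rho\mathcal{E}_0|_{\rho=0}=I(\tilde{X};Y)$. The only cosmetic differences are that the paper cites \cite[Theorem~2]{Gallager1965p3} for the derivative identity rather than deriving it, and phrases the evaluation of $\lim_{\rho\to0^+}L(\rho)$ via L'Hospital's Rule instead of your first-order expansion; your more explicit handling of the order of limits is, if anything, cleaner.
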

\begin{proof}
Dividing both sides of~\eqref{eq:ach_covert_bits} by $\sqrt{n}$, yields $\lim_{n\rightarrow \infty} \frac{1}{\sqrt{n}} \log_2 M \geq \epsilon_{\rm det} L(\rho)$. From~\cite[Theorem~2]{Gallager1965p3}, $\mathcal{E}_0(\rho,p_{\tilde{X}},p_{Y|X})$ is a non-decreasing function and $  \frac{\partial}{\partial \rho}  \left. \mathcal{E}_0(\rho,p_{\tilde{X}},p_{Y|X}) \right|_{\rho=0} = I(\tilde{X};Y) $.  Therefore $L(\rho)$ is maximised when $\rho = 0$ and hence~\eqref{eq:L_asymptotic} is obtained using L'Hospital's Rule~\cite[Sec.~3.4]{Abramowitz1972book} on~\eqref{eq:L_def}.
\end{proof}
From~\eqref{eq:L_asymptotic} it can be observed that replacing $\chi^2(p_{\tilde{W}} \parallel p_{W|X=x_0})$ with $\chi^2(p_{\tilde{Y}} \parallel p_{Y|X=x_0})$ and multiplying by $\xi (\log 2)/\sqrt{2} \approx (\log 2)/\sqrt{2}$ results in~\cite[eq.~(28)]{Wang2016p3493}.  The first alteration is required to ensure the Dx and Rx have the same channel as assumed in~\cite{Wang2016p3493}, and the $\log 2$ scaling is to convert from bits to nats.  The extra division by $\sqrt{2}$ is a consequence of~\cite{Wang2016p3493} constraining the relative entropy rather than the variational distance as done in this paper.  There is a further subtle difference between~\eqref{eq:L_asymptotic} and~\cite[eq.~(28)]{Wang2016p3493}, in that~\eqref{eq:L_asymptotic} is expressed directly in terms of $I(\tilde{X};Y)$ whereas~\cite[eq.~(28)]{Wang2016p3493} requires $\EE[ D(p_{Y|X=\tilde{X}} \parallel p_{Y|X=x_0}) ]$, a consequence of using the Taylor series expansion of $I(X;Y)$ in $\tau$.

In contrast to Corollary~\ref{cor:L_asymp}, for the finite block length regime the last term of~\eqref{eq:ach_covert_bits}  becomes significant, particularly if one considers the information rate as demonstrated in the following corollary.
\begin{corollary} \label{cor:optimal_n}
In Theorem~\ref{thrm:gallager_tau}, define the achievable information rate
\begin{equation}
R(n) \triangleq \frac{1}{n} \log_2 M \geq \frac{\epsilon_{\rm det}  L(\rho)}{\sqrt{n}} + \frac{1}{n\rho} \log_2 \epsilon_{\rm dec} \label{eq:ach_lpd_rate}
\end{equation}
for any $0 \leq \rho \leq 1$. Then~\eqref{eq:ach_lpd_rate} is maximised by
\begin{equation}
\sqrt{n^*} = \frac{2  \log_2 \frac{1}{\epsilon_{\rm dec}}}{ \epsilon_{\rm det} \rho L(\rho)} \label{eq:optimal_n}
\end{equation}
for which
\begin{equation}
R(n^*) \geq  \frac{\epsilon^2_{\rm det} }{4 \log_2 \frac{1}{\epsilon_{\rm dec}}} \rho L^2(\rho). \label{eq:optimal_R}
\end{equation}
Moreover, the number of achievable bits that can be sent in $n^*$ channel uses is given by
\begin{equation}
k^* = n^* R(n^*) = \frac{1}{\rho} \log_{2} \frac{1}{\epsilon_{\rm dec}}. \label{eq:k_star}
\end{equation}
\end{corollary}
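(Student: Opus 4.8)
The plan is to treat the right-hand side of~\eqref{eq:ach_lpd_rate} as a function of the block length $n$ and locate its maximiser by elementary calculus, relaxing $n$ to a positive real variable. Writing the bound as $g(n) = \epsilon_{\rm det} L(\rho)\, n^{-1/2} + (\rho^{-1}\log_2\epsilon_{\rm dec})\, n^{-1}$, the key observation is that the substitution $u = 1/\sqrt{n}$ converts $g$ into a quadratic in $u$, namely $g = \epsilon_{\rm det} L(\rho)\, u + \rho^{-1}(\log_2\epsilon_{\rm dec})\, u^2$. This change of variable is the one mildly non-obvious move; everything afterwards is routine.

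Since $0 < \epsilon_{\rm dec} < 1$ we have $\log_2\epsilon_{\rm dec} < 0$, so the coefficient of $u^2$ is strictly negative and $g$ is a concave (downward) parabola in $u$ with a unique interior maximiser. Setting $dg/du = 0$ gives $\epsilon_{\rm det} L(\rho) + 2\rho^{-1}(\log_2\epsilon_{\rm dec})\,u = 0$, whence $u^* = \epsilon_{\rm det}\rho L(\rho)/\big(2\log_2(1/\epsilon_{\rm dec})\big)$, using $-\log_2\epsilon_{\rm dec} = \log_2(1/\epsilon_{\rm dec})$. Inverting $u^* = 1/\sqrt{n^*}$ then yields~\eqref{eq:optimal_n} directly.

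To obtain~\eqref{eq:optimal_R} I substitute $u^*$ back into the quadratic. Abbreviating $A = \epsilon_{\rm det} L(\rho)$ and $B = \rho^{-1}\log_2(1/\epsilon_{\rm dec}) > 0$, the bound reads $g = Au - Bu^2$ with maximiser $u^* = A/(2B)$, so the maximum value is $g(u^*) = A^2/(4B)$; expanding $A$ and $B$ recovers~\eqref{eq:optimal_R}. Finally, $k^* = n^* R(n^*)$ follows by multiplying the maximum rate by $n^* = 1/(u^*)^2 = 4B^2/A^2$, which collapses to $k^* = (4B^2/A^2)\,(A^2/(4B)) = B = \rho^{-1}\log_2(1/\epsilon_{\rm dec})$, i.e.~\eqref{eq:k_star}; notably the dependence on $\epsilon_{\rm det}$ and $L(\rho)$ cancels entirely.

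There is no substantive obstacle here --- the result is a one-variable concave optimisation --- so the only points needing care are bookkeeping. First, one must correctly track the sign of $\log_2\epsilon_{\rm dec}$, which is precisely what supplies the concavity guaranteeing a maximum rather than a minimum. Second, $n$ is genuinely an integer whereas the optimisation is performed over its real relaxation, so strictly $n^*$ should be rounded to a nearby integer; since the statement is an achievability bound, this rounding only weakens the guaranteed rate negligibly and leaves the $\mathcal{O}(\sqrt{n})$ scaling intact, so I would not dwell on it.
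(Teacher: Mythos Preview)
Your argument is correct and is exactly the elementary one-variable optimisation the corollary invites; the paper in fact states Corollary~\ref{cor:optimal_n} without proof, so your derivation simply fills in the routine calculus step the authors left implicit. The substitution $u=1/\sqrt{n}$, the sign check on $\log_2\epsilon_{\rm dec}$, and the remark on integer rounding are all appropriate and there is nothing to add.
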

Corollary~\ref{cor:optimal_n} shows that in the finite block length regime, the achievable rate first increases with  $n$ to a maximum value $R(n^*)$ and then begins decreasing as the SRL dominates.  From~\eqref{eq:optimal_n} it can be seen that $n^*$ increases with decreasing $\epsilon_{\rm det}$ and $\epsilon_{\rm dec}$, while on the other hand, from~\eqref{eq:optimal_R},  $R(n^*)$ decreases.

%
%
\section{LPD Communication Examples} \label{sec:lpd_examples}
To highlight the insights of the previous section, this section applies Theorem~\ref{thrm:gallager_tau} to two well known channels studied in the literature - the BSC and AWGN channel~\cite{Cover2006book}.

%
%
\subsection{Binary Symmetric Channel}
\begin{theorem} \label{thrm:bsc_ach_rate}
Suppose the Dx and Rx's channels are independent BSCs with crossover probabilities $\epsilon_{\rm Dx}$ and $\epsilon_{\rm Rx}$ respectively. Then for any $0 \leq \rho  \leq 1$,
\begin{equation}
\log_2 M \geq \sqrt{n} \epsilon_{\rm det} L_{\rm BSC}(\rho,\epsilon_{\rm Rx},\epsilon_{\rm Dx} )  + \frac{1}{\rho} \log_2 \epsilon_{\rm dec} \label{eq:bsc_ach_bits}
\end{equation}
where
\begin{align}
L_{\rm BSC}(\rho,\epsilon_{\rm Rx},\epsilon_{\rm Dx} ) = & \frac{2\xi }{\log 2} \left( \frac{1+\rho}{\rho} \right) \frac{\sqrt{\epsilon_{\rm Dx}(1 - \epsilon_{\rm Dx})}}{(1 - 2 \epsilon_{\rm Dx})}  \notag \\
& \times \left[ (1-\epsilon_{\rm Rx})^{\frac{1}{1+\rho}} - \epsilon_{\rm Rx}^{\frac{1}{1+\rho}} \right] \left[ (1-\epsilon_{\rm Rx})^{\frac{\rho}{1+\rho}} - \epsilon_{\rm Rx}^{\frac{\rho}{1+\rho}} \right]. \label{eq:L_bsc}
\end{align}
In addition,
\begin{align}
\lim_{n \rightarrow \infty} \frac{1}{\sqrt{n}} \log_2 M \geq &  2 \epsilon_{\rm det} \xi  \sqrt{\epsilon_{\rm Dx}(1 - \epsilon_{\rm Dx})} \frac{1 - 2 \epsilon_{\rm Rx}}{1 - 2 \epsilon_{\rm Dx}}  \log_2 \left( \frac{1 - \epsilon_{\rm Rx}}{\epsilon_{\rm Rx}} \right). \label{eq:bsc_asymp_L}
\end{align}
\end{theorem}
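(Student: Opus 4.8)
The plan is to specialise Theorem~\ref{thrm:gallager_tau} to the binary alphabet $\{0,1\}$ with innocent symbol $x_0=0$. Since the only non-innocent symbol is $1$, the natural kernel is the point mass at $x=1$, so the sparse density \eqref{eq:sparse_signalling_density} reduces to the Bernoulli input $p_X(0)=1-\tau$, $p_X(1)=\tau$. With this choice the theorem amounts to evaluating the two channel-dependent quantities entering $L(\rho)$ in \eqref{eq:L_def} --- the Dx chi-squared distance $\chi^2(p_{\tilde W}\parallel p_{W|X=x_0})$ and the Rx sparse-exponent factor --- and then optimising over $\rho$ for the asymptotic claim. Throughout I write $a=1-\epsilon_{\rm Rx}$, $b=\epsilon_{\rm Rx}$ and $s=1/(1+\rho)$.

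First I would compute the Dx factor. For the Dx's BSC the point-mass kernel induces $p_{\tilde W}=(\epsilon_{\rm Dx},1-\epsilon_{\rm Dx})$ while $p_{W|X=x_0}=(1-\epsilon_{\rm Dx},\epsilon_{\rm Dx})$, so a two-term sum gives $\chi^2(p_{\tilde W}\parallel p_{W|X=x_0})=(1-2\epsilon_{\rm Dx})^2/[\epsilon_{\rm Dx}(1-\epsilon_{\rm Dx})]$ and hence $1/\sqrt{\chi^2}=\sqrt{\epsilon_{\rm Dx}(1-\epsilon_{\rm Dx})}/(1-2\epsilon_{\rm Dx})$, exactly the Dx-dependent factor in \eqref{eq:L_bsc}; substituting it into \eqref{eq:tau_constraint} bounds $\tau$. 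For the Rx factor the exact Gallager exponent of the Bernoulli input is $\mathcal{E}_0(\rho,p_X,p_{Y|X})=-\log\left\{\left[(1-\tau)a^s+\tau b^s\right]^{1+\rho}+\left[(1-\tau)b^s+\tau a^s\right]^{1+\rho}\right\}$, and differentiating at $\tau=0$ yields $\mathcal{E}_0(\rho,p_X,p_{Y|X})=(1+\rho)\tau(a^s-b^s)(a^{1-s}-b^{1-s})+o(\tau)$, the bracketed product of \eqref{eq:L_bsc}. Inserting this exponent and the maximal $\tau$ into Gallager's bound \eqref{eq:gallager_iid_memoryless}, i.e. $\log M\geq(n/\rho)\mathcal{E}_0+(1/\rho)\log\epsilon_{\rm dec}$, and converting to base-two logarithms collects the Dx and Rx factors into $L_{\rm BSC}$, which establishes \eqref{eq:bsc_ach_bits}; note the $\sqrt n$ scaling arises because $\tau=\mathcal{O}(1/\sqrt n)$ multiplies the $n$ in front of $\mathcal{E}_0$.

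For the asymptotic statement \eqref{eq:bsc_asymp_L} I would divide by $\sqrt n$ and maximise over $\rho$. As in Corollary~\ref{cor:L_asymp}, $L_{\rm BSC}$ is largest as $\rho\to 0$; writing $t=\rho/(1+\rho)$ and applying L'Hospital's rule gives $\lim_{\rho\to 0}\frac{1+\rho}{\rho}(a^s-b^s)(a^{1-s}-b^{1-s})=(a-b)\log(a/b)=(1-2\epsilon_{\rm Rx})\log\frac{1-\epsilon_{\rm Rx}}{\epsilon_{\rm Rx}}$, and after the $\log 2$ conversion this reproduces \eqref{eq:bsc_asymp_L}.

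The main obstacle is the Rx exponent step, because the point-mass kernel makes Lemma~\ref{lem:sparse_exponent_minkowski} degenerate: a deterministic input carries no information, so $\mathcal{E}_0(\rho,p_{\tilde X},p_{Y|X})=0$ and its Minkowski bound collapses to the trivial $\mathcal{E}_0(\rho,p_X,p_{Y|X})\geq 0$, which cannot yield the nonzero coefficient above. The clean fix is to apply Theorem~\ref{thrm:gallager_tau} with a genuinely binary kernel $p_{\tilde X}(1)=q\in(0,1)$, for which the accompanying $\chi^2$ scales as $q^2$, and to verify that the resulting per-$q$ coefficient $\tfrac{1}{q}\bigl(1-e^{-\frac{1}{1+\rho}\mathcal{E}_0(\rho,p_{\tilde X},p_{Y|X})}\bigr)$ is monotone in $q$ and increases to the first-order slope $(a^s-b^s)(a^{1-s}-b^{1-s})$ as $q\to 0^{+}$. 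The $q$ factors in the exponent coefficient and in $\sqrt{\chi^2}$ then cancel, and passing to this supremum recovers the stated $L_{\rm BSC}$ as a limiting achievable value; this limiting argument is what I expect to require the most care.
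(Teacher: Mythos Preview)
Your proposal is correct and lands on essentially the same argument as the paper. The paper's proof is precisely your ``clean fix'': it applies Theorem~\ref{thrm:gallager_tau} with a genuine Bernoulli kernel $p_{\tilde X}(1)=u\in(0,1)$, computes $\mathcal{E}_0(\rho,u,\epsilon_{\rm Rx})$ and $\chi^2(u,\epsilon_{\rm Dx})=u^2(1-2\epsilon_{\rm Dx})^2/[\epsilon_{\rm Dx}(1-\epsilon_{\rm Dx})]$, substitutes into $L(\rho)$, shows the resulting function of $u$ is monotone decreasing on $(0,1)$, and then takes $u\to 0$ via L'Hospital's rule to obtain the coefficient $(a^s-b^s)(a^{1-s}-b^{1-s})$ you derived; the $u$ in the exponent factor cancels the $u$ in $\sqrt{\chi^2}$ exactly as you anticipated. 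Your initial detour through the first-order Taylor expansion of the exact $\mathcal{E}_0$ in $\tau$ is informative but, as you correctly diagnosed, does not plug into Theorem~\ref{thrm:gallager_tau} because Lemma~\ref{lem:sparse_exponent_minkowski} degenerates for a point-mass kernel; the paper avoids this detour and goes straight to the parametrised-kernel limit. The asymptotic statement is handled identically in both: divide by $\sqrt n$ and send $\rho\to 0$ by L'Hospital, invoking Corollary~\ref{cor:L_asymp}.
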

\begin{proof}
See Appendix~\ref{app:proof_bsc_ach_rate}
\end{proof}
From~\eqref{eq:L_bsc}, $L_{\rm BSC}(\rho, \epsilon_{\rm Rx}, 0) = L_{\rm BSC}(\rho, \epsilon_{\rm Rx},1) = 0$, i.e. when the Dx has a perfect (or perfectly inverted) channel, no information can be sent covertly.  On the other hand, $L_{\rm BSC}(\rho, \epsilon_{\rm Rx}, \frac{1}{2}) = \infty$, i.e. when Dx's channel is useless for
detection the Tx can send information at an order greater than
$\sqrt{n}$. In fact, for this case,
constraint~\eqref{eq:chi2_constraint} is redundant, since $\chi^2(p_{W} \parallel p_{W|X=x_0}) = 0$ regardless of $p_{X}$ and therefore the Tx can transmit unfettered
at the capacity of the Rx's channel.  If the Rx's channel is useless, $L_{\rm BSC}(\rho,\frac{1}{2}, \epsilon_{\rm Dx}) = 0$, as expected, no information can be sent. As shown in Appendix~\ref{app:proof_bsc_ach_rate} the lower bound~\eqref{eq:bsc_ach_bits} requires taking the limit $p_{\tilde{X}}(1) \rightarrow 0$.  This peculiarity does not imply that a binary $1$ is never transmitted, but rather results in the overall probability of sending a $1$, i.e. $\tau p_{\tilde{X}}(1)$, being $\mathcal{O}(1/\sqrt{n})$.

Fig.~\ref{fig:bsc_ach_rate} illustrates these insights for the case when $\epsilon_{\rm det} = 0.1$, $\epsilon_{\rm dec} = 10^{-3}$, and various $\epsilon_{\rm Dx}$. The solid lines plot the finite block length achievable information rate, i.e.~\eqref{eq:bsc_ach_bits} divided $n$, and exhibits the behaviour as expected from Corollary~\ref{cor:optimal_n}. For small $n$, the achievable rate increases with $n$ up to a maximum value and then decreases with $n$ as the SRL dominates.  The dashed lines plot the asymptotic large block length achievable rate, i.e.~\eqref{eq:bsc_asymp_L} divided by $\sqrt{n}/\epsilon_{\rm det}$, confirming the convergence of the finite block length results to the asymptotic large block length bound as $n$ increases.  Interestingly, this convergence is slow with $n$, which further highlights the importance of finite block length analysis in LPD communication system design.  

%
%
\subsection{AWGN Channel}
For the AWGN channel, applying Theorem~\ref{thrm:gallager_tau} yields the following result.
\begin{theorem} \label{thrm:awgn_ach_rate}
Suppose the Rx and Dx channels are independent AWGN channels with noise variance $\sigma_{\rm Rx}^2$ and $\sigma_{\rm Dx}^2$ respectively. Assuming a Gaussian kernel density, then for any $0 \leq \rho \leq 1$,
\begin{align}
\log_2 M \geq &  \frac{ \sqrt{2n} \epsilon_{\rm det}  \xi  }{(1+\rho)\log 2} \frac{\sigma_{\rm Dx}^2}{\sigma_{\rm Rx}^2}   + \frac{1}{\rho} \log_2 \epsilon_{\rm dec}. \label{eq:awgn_lpd}
\end{align}
Provided 
\begin{equation}
n > n_{\rm min} = \frac{1}{2 \xi^2  \epsilon_{\rm det}^2}  \frac{\sigma_{\rm Rx}^4}{\sigma_{\rm Dx}^4}  \log^2 \frac{1}{\epsilon_{\rm dec}} \label{eq:awgn_n_min}
\end{equation}
then~\eqref{eq:awgn_lpd} is maximised by
\begin{equation}
\rho^* = \frac{1}{\sqrt{\frac{n}{n_{\rm min}}} - 1} \left[ 1 + \left(\frac{n}{n_{\rm min}} \right)^{\frac{1}{4}} \right]. \label{eq:rho_opt_awgn}
\end{equation}
The peak achievable information rate and the block length it occurs at are given by
\begin{align}
R_{\rm awgn}(n^*) &= \frac{\epsilon^2_{\rm det} \xi^2 }{8 (\log 2) \log \frac{1}{\epsilon_{\rm dec}}} \frac{\sigma_{\rm Dx}^4}{\sigma_{\rm Rx}^4} \label{eq:R_star_awgn} \\
n^* &= \frac{8 \log^2 \epsilon_{\rm dec}}{\epsilon_{\rm det}^2 \xi^2 } \frac{\sigma_{\rm Rx}^4}{\sigma_{\rm Dx}^4} \label{eq:n_star_awgn}
\end{align}
respectively. Moreover, the number of achievable information bits that can be sent using $n^*$ channel uses is given by
\begin{equation}
k^* = n^* R_{\rm awgn}(n^*) = \log_2 \frac{1}{\epsilon_{\rm dec}}. \label{eq:k_star_awgn}
\end{equation}
\end{theorem}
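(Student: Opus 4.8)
The plan is to obtain the AWGN result as a specialisation of the general achievability bound of Theorem~\ref{thrm:gallager_tau}, so the whole task reduces to evaluating the two channel-dependent quantities inside $L(\rho)$ in~\eqref{eq:L_def}: the kernel exponent $\mathcal{E}_0(\rho,p_{\tilde{X}},p_{Y|X})$ and the chi-squared distance $\chi^2(p_{\tilde{W}}\parallel p_{W|X=x_0})$. I take the innocent symbol to be $x_0=0$ and the Gaussian kernel to be zero-mean, $p_{\tilde{X}}=\mathcal{N}(0,P)$, so that both the Rx and Dx output densities are zero-mean Gaussians and the subsequent integrals are tractable.

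First I would compute $\mathcal{E}_0$ in closed form. Since $p_{Y|X}(y|x)^{1/(1+\rho)}$ is, up to a constant, a Gaussian kernel of variance $(1+\rho)\sigma_{\rm Rx}^2$, the inner integral $\int p_{\tilde{X}}(x)\,p_{Y|X}(y|x)^{1/(1+\rho)}\,dx$ is a Gaussian convolution and equals a Gaussian in $y$ times an explicit constant; raising to the power $1+\rho$ and integrating over $y$ is a second Gaussian integral. This yields the clean form $\mathcal{E}_0(\rho,\mathcal{N}(0,P),p_{Y|X})=\tfrac{\rho}{2}\log\!\big(1+\tfrac{P}{(1+\rho)\sigma_{\rm Rx}^2}\big)$. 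Next, because $p_{\tilde{W}}=\mathcal{N}(0,P+\sigma_{\rm Dx}^2)$ and $p_{W|X=0}=\mathcal{N}(0,\sigma_{\rm Dx}^2)$ share the same mean, the chi-squared distance is a standard Gaussian integral, convergent whenever $P<\sigma_{\rm Dx}^2$, giving $\chi^2=\sigma_{\rm Dx}^2\big[(P+\sigma_{\rm Dx}^2)(\sigma_{\rm Dx}^2-P)\big]^{-1/2}-1$.

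As in the BSC case (Appendix~\ref{app:proof_bsc_ach_rate}), the tightest bound is obtained in the vanishing-kernel limit, here $P\to 0$, which maximises the ratio appearing in~\eqref{eq:L_def}. This is the step I expect to need the most care, because both $\mathcal{E}_0$ and $\chi^2$ vanish as $P\to 0$, so $L(\rho)$ is a $0/0$ limit that must be resolved by leading-order expansions: $\mathcal{E}_0\sim\tfrac{\rho P}{2(1+\rho)\sigma_{\rm Rx}^2}$, hence $1-e^{-\mathcal{E}_0/(1+\rho)}\sim\tfrac{\rho P}{2(1+\rho)^2\sigma_{\rm Rx}^2}$, while $\chi^2\sim P^2/(2\sigma_{\rm Dx}^4)$ so that $\sqrt{\chi^2}\sim P/(\sqrt2\,\sigma_{\rm Dx}^2)$. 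The factors of $P$ cancel and the surviving limit is $L_{\rm awgn}(\rho)=\tfrac{\sqrt2\,\xi}{(1+\rho)\log 2}\tfrac{\sigma_{\rm Dx}^2}{\sigma_{\rm Rx}^2}$; substituting into~\eqref{eq:ach_covert_bits} produces~\eqref{eq:awgn_lpd}. As in the BSC remark, $P\to 0$ does not imply zero power is ever transmitted; rather the effective power $\tau P$ remains $\mathcal{O}(1/\sqrt n)$.

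Finally I would treat the optimisation in two parts. For fixed $n$, writing the right-hand side of~\eqref{eq:awgn_lpd} as $f(\rho)=\tfrac{A}{1+\rho}-\tfrac{B}{\rho}$ with $A,B>0$, the condition $f'(\rho)=0$ gives $\tfrac{1+\rho}{\rho}=\sqrt{A/B}$; a short computation shows $A/B=\sqrt{n/n_{\rm min}}$ with $n_{\rm min}$ exactly as in~\eqref{eq:awgn_n_min}, so an interior (and, since $f\to-\infty$ as $\rho\to0^+$ and $f\to0^-$ as $\rho\to\infty$, maximising) critical point exists precisely when $n>n_{\rm min}$; solving for $\rho$ and using $\tfrac{1}{(n/n_{\rm min})^{1/4}-1}=\tfrac{1+(n/n_{\rm min})^{1/4}}{\sqrt{n/n_{\rm min}}-1}$ yields~\eqref{eq:rho_opt_awgn}. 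For the peak rate I would instead invoke Corollary~\ref{cor:optimal_n} with $L=L_{\rm awgn}(\rho)$: its optimal-rate expression is proportional to $\rho L^2(\rho)\propto\rho/(1+\rho)^2$, which is increasing on $[0,1]$, so the peak is attained at $\rho=1$; evaluating $L_{\rm awgn}(1)$ in~\eqref{eq:optimal_n}, \eqref{eq:optimal_R} and~\eqref{eq:k_star}, and converting between natural and base-two logarithms, gives~\eqref{eq:n_star_awgn}, \eqref{eq:R_star_awgn} and~\eqref{eq:k_star_awgn}. As a consistency check the two parts meet: $n^*=16\,n_{\rm min}$, at which $\rho^*$ from~\eqref{eq:rho_opt_awgn} equals $1$.
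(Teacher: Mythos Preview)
Your proposal is correct and follows essentially the same route as the paper's own proof: compute $\mathcal{E}_0$ and $\chi^2$ for the Gaussian kernel, take the $P\to 0$ limit of $L(\rho)$ to obtain~\eqref{eq:awgn_lpd}, optimise $f(\rho)=A/(1+\rho)-B/\rho$ for $\rho^*$, and then invoke Corollary~\ref{cor:optimal_n} with the maximiser $\rho=1$ of $\rho/(1+\rho)^2$ to get~\eqref{eq:R_star_awgn}--\eqref{eq:k_star_awgn}. The only point the paper makes slightly more explicit is that $L(\rho)$ is monotone decreasing in $P$ (so $P\to 0$ really is the maximiser, not just a convenient limit); you assert this by analogy with the BSC case, and it is worth stating that the same series-expansion argument confirms it here.
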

\begin{proof}
See Appendix~\ref{app:proof_awgn_ach_rate}.
\end{proof}
From Theorem~\ref{thrm:awgn_ach_rate} it is clear that the ratio $\frac{\sigma^2_{\rm Dx}}{\sigma^2_{\rm Rx}}$ plays an important role in the achievability bound. Intuitively, the noisier the Dx's channel is relative to the Rx, then more bits are achievable with LPD and the smaller the minimum block length~\eqref{eq:awgn_n_min}.  Note that from~\eqref{eq:k_star_awgn}, while $k^*$ is constant, as $\frac{\sigma^2_{\rm Dx}}{\sigma^2_{\rm Rx}}$ decreases, the required number of channel uses~\eqref{eq:n_star_awgn} increases and hence the peak achievable information rate decreases.  These insights are illustrated in Fig.~\ref{fig:awgn_ach_rate} which plots~\eqref{eq:awgn_lpd} divided by $n$ (solid lines) compared to the asymptotic large block length result (dashed lines), i.e. the first term of~\eqref{eq:awgn_lpd} divided by $n$ with $\rho = 0$, for $\epsilon_{\rm det} = 0.1$, $\epsilon_{\rm dec} = 10^{-3}$ and various $\frac{\sigma_{\rm Dx}^2}{\sigma^2_{\rm Rx}}$. In addition, the minimum block length required to support a positive achievable information rate~\eqref{eq:awgn_n_min} is also shown (dot-dashed line).

%
%

\section{Conclusion} \label{sec:conc}

This paper studied reliable communication subject to a LPD constraint in the finite block length regime.  It was shown that the LPD constraint can be weakened to a constraint on the chi-squared distance, but without resorting to Taylor series expansions and bounds on the natural logarithm as required by other works in the literature~\cite{Bash2013p1921,Wang2015isit,Wang2016p3493,Bloch2016p2334}.  A new lower bound on the number of bits that can be transmitted both reliably and covertly was derived via Gallager's work on error exponents~\cite{Gallager1965p3,Gallager1968book}.  In particular, it was shown that using a sparse signalling scheme, the overall error exponent can be upper bounded in terms of the sparseness factor and the exponent of the underlying kernel density.  The resulting achievability bound exhibited behaviour whereby the number of achievable bits increases rapidly for small block lengths until it reaches a threshold  block length at which point the rate of increase is then dominated by the SRL, i.e. the number of achievable bits increases on the order of the square root of the block length.  In terms of the achievable information rate (bits per channel use), this corresponds to the achievable rate increasing with the number of channel uses until the threshold block length is reached and then begins decreasing at a rate that is inversely proportional the square root of the block length.  These results were applied to the BSC and AWGN channel.  For the BSC the achievability bound was derived in terms of the Rx and Dx's crossover probabilities.  For the AWGN channel, the resulting bound was shown to be directly proportional to the Dx-to-Rx noise power ratio.

\section*{Acknowledgement}
The author is thankful to Prof. Albert Guill\'{e}n i F\`{a}bregas for insightful discussions on the work presented in this paper.
\begin{figure}[t]
\centering
\includegraphics[width=0.9\columnwidth]{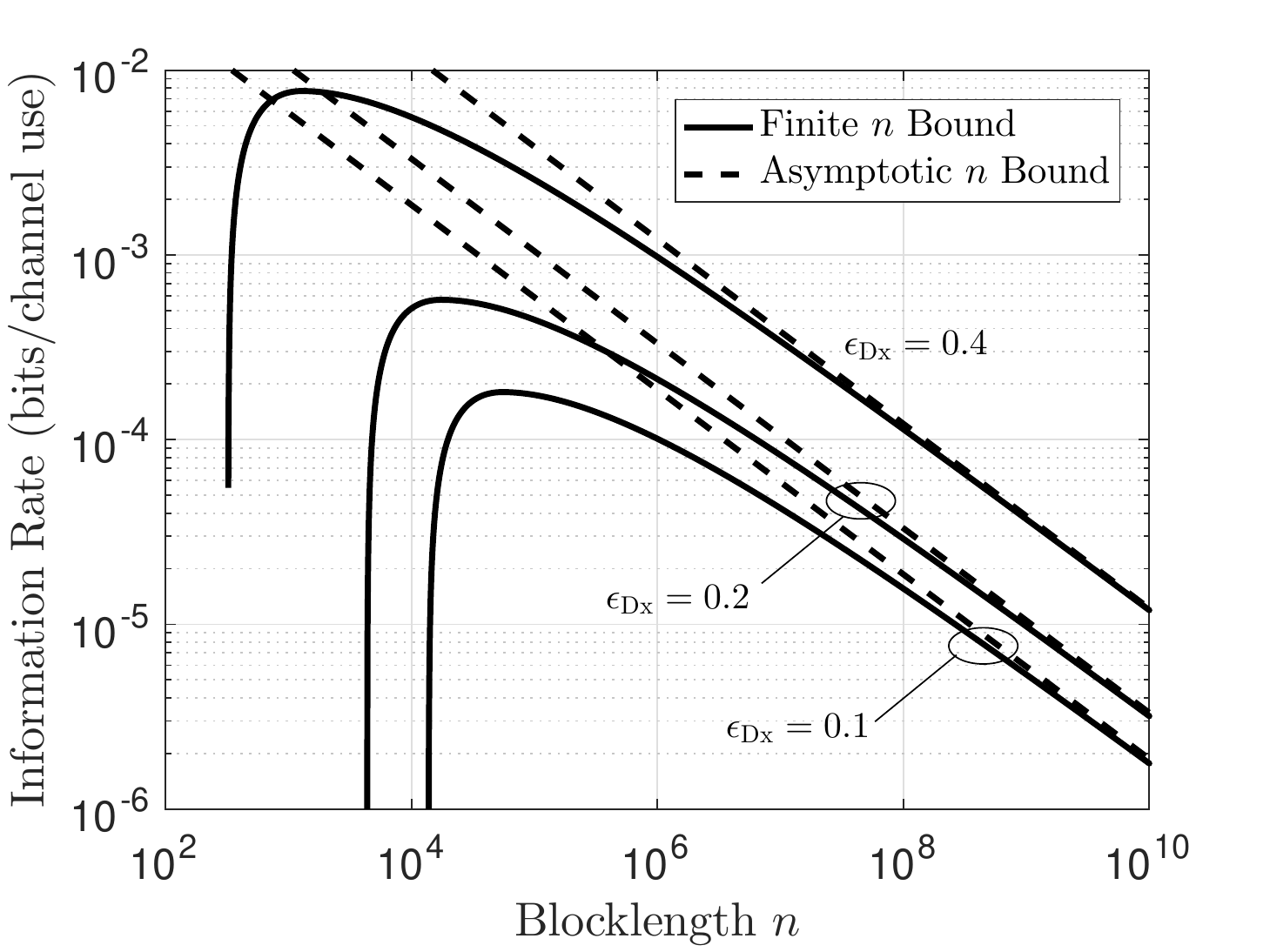} 
\caption{LPD achievable information rate (bits/channel use) for the BSC with  $\epsilon_{\rm dec} = 10^{-3}$, $\epsilon_{\rm det} = 0.1$, $\epsilon_{\rm Rx} = 0.1$ and various $\epsilon_{\rm Dx}$.  Solid lines plot~\eqref{eq:bsc_ach_bits} divided by $n$ and dashed lines plot~\eqref{eq:bsc_asymp_L} divided by $\sqrt{n}$}.
\label{fig:bsc_ach_rate}
\end{figure}
\begin{figure}[t]
\centering
\includegraphics[width=0.9\columnwidth]{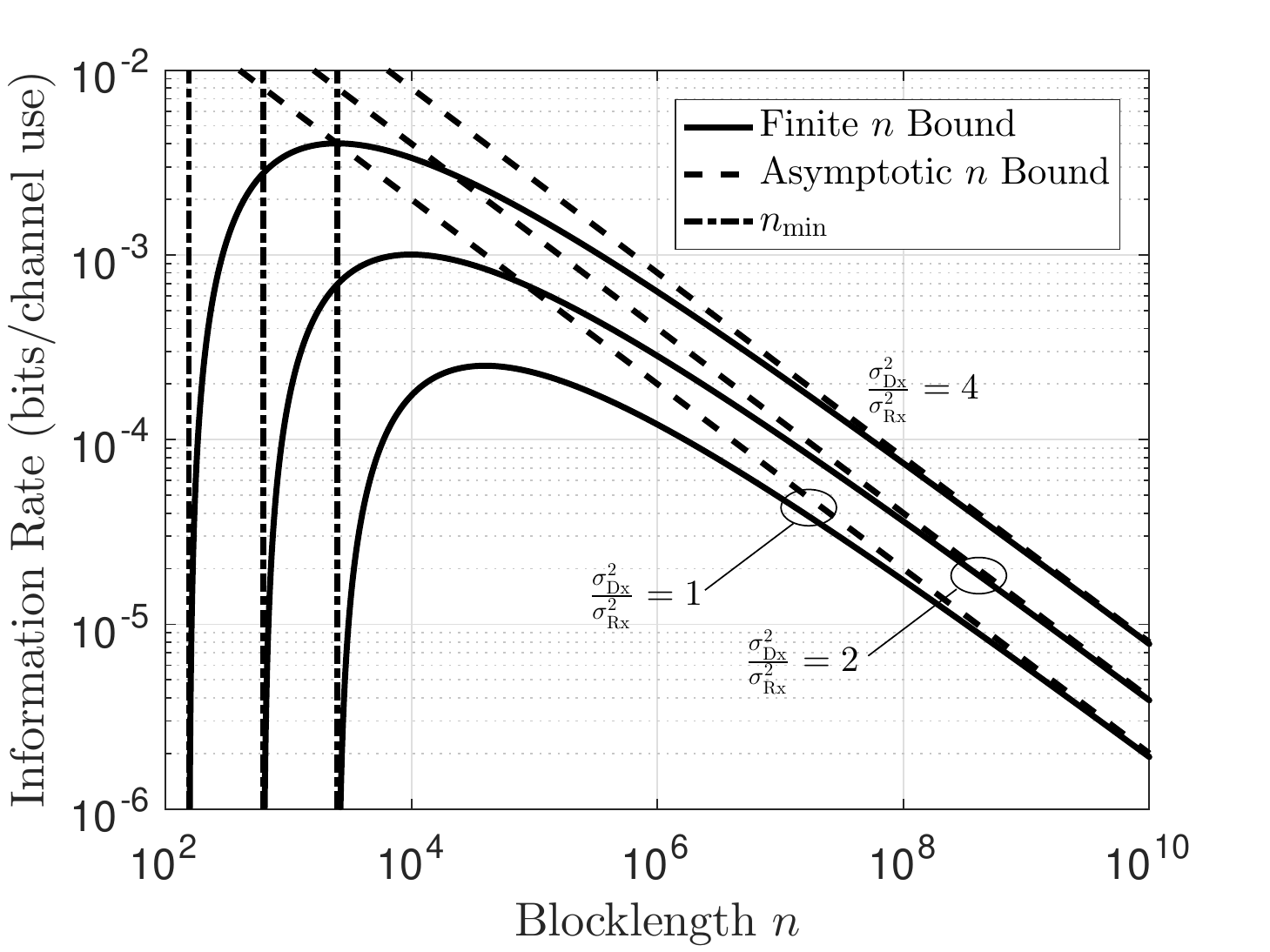} 
\caption{LPD achievable information rate (bits/channel use) for the AWGN channel with  $\epsilon_{\rm dec} = 10^{-3}$, $\epsilon_{\rm det} = 0.1$ and various noise power ratios $\sigma_{\rm Dx}^2/\sigma_{\rm Rx}^2$.  Solid lines plot~\eqref{eq:awgn_lpd} divided by $n$ and dashed lines plots the asymptotic large block length bound.  The dot-dashed lines show the minimum block length~\eqref{eq:awgn_n_min}. } \label{fig:awgn_ach_rate}
\end{figure}

%
%
\appendices

%
%
\section{Proof of Lemma~\ref{lem:sparse_exponent_minkowski}} \label{app:sparse_exponent_minkowski_proof}
Using~\eqref{eq:sparse_signalling_density} the integral in~\eqref{eq:gal_exponent} can be written as
\begin{align}
\mathcal{I} &= \int_{\mathcal{Y}} \left\{ \int_{\mathcal{X}} p_{X}(x) \left[ p_{Y |X}(y|x) \right]^{\frac{1}{1+\rho}} \, d x \right\}^{1+\rho} \, d y  \notag \\
&=  \int_{\mathcal{Y}} \left\{ (1-\tau) p^{\frac{1}{1+\rho}}(y|x_0) + \tau \int_{\mathcal{X}} p_{\tilde{X}}(x) p_{Y|X}^{\frac{1}{1+\rho}}(y|x) \, dx \right\}^{1+\rho} \, dy \notag \\
&=  \left[ \left(\int_{\mathcal{Y}} \left\{ (1-\tau) p_{Y|X}^{\frac{1}{1+\rho}}(y|x_0) + \tau \int_{\mathcal{X}} p_{\tilde{X}}(x) p_{Y|X}^{\frac{1}{1+\rho}}(y|x) \, dx \right\}^{1+\rho} \, dy \right)^{\frac{1}{1+\rho}} \right]^{1+\rho}
\end{align}
Using Minkowski's inequality,
\begin{align}
\mathcal{I} &\leq \left[  \left(\int_{\mathcal{Y}} (1-\tau)^{1+\rho} p_{Y|X}(y|x_0) \, dy \right)^{\frac{1}{1+\rho}} + \tau  \left\{ \int_{\mathcal{Y}} \left( \int_{\mathcal{X}} p_{\tilde{X}}(x) p_{Y|X}^{\frac{1}{1+\rho}}(y|x) \, dx  \right)^{1+\rho} \, dy  \right\}^{\frac{1}{1+\rho}}  \right]^{1+\rho} \notag \\
&= \left[   1-\tau \left(1 -  \left\{ \int_{\mathcal{Y}} \left( \int_{\mathcal{X}} p_{\tilde{X}}(x) p_{Y|X}^{\frac{1}{1+\rho}}(y|x) \, dx  \right)^{1+\rho} \, dy  \right\}^{\frac{1}{1+\rho}} \right)  \right]^{1+\rho} \notag \\
&= \left[   1-\tau \left(1 -  e^{- \frac{1}{1+\rho} \mathcal{E}_0(\rho,p_{\tilde{X}},p_{Y|X})} \right)  \right]^{1+\rho}
\end{align}

%
%
\section{Proof of Theorem~\ref{thrm:bsc_ach_rate}} \label{app:proof_bsc_ach_rate}
For the BSC it is not difficult to show that
\begin{align}
\mathcal{E}_{0}(\rho,p_{\tilde{X}},p_{Y|X}) &= \mathcal{E}_{0}(\rho,u,\epsilon_{\rm Rx}) = -\log \Biggl( \left[ (1-u)(1-\epsilon_{\rm Rx})^{\frac{1}{1+\rho}} + u \epsilon_{\rm Rx}^{\frac{1}{1+\rho}}  \right]^{1+\rho} \notag \\
& + \left[ (1-u)\epsilon_{\rm Rx}^{\frac{1}{1+\rho}} + u (1-\epsilon_{\rm Rx})^{\frac{1}{1+\rho}}  \right]^{1+\rho} \Biggl) \\
\chi^2( p_{\tilde{W}} \parallel p_{W|X=x_0}) &= \chi^2(u,\epsilon_{\rm Dx}) =  \frac{u^2 (1-2 \epsilon_{\rm Dx})^2}{\epsilon_{\rm Dx} (1 - \epsilon_{\rm Dx})} \label{eq:bsc_chi_dist}
\end{align}
Substituting the above expressions into~\eqref{eq:L_def} yields
\begin{align}
L_{\rm BSC}(\rho,u,\epsilon_{\rm Rx},\epsilon_{\rm Dx} ) = & \frac{2 \xi }{\log 2} \left( \frac{1+\rho}{\rho} \right) \frac{\sqrt{\epsilon_{\rm Dx}(1 - \epsilon_{\rm Dx})}}{(1 - 2 \epsilon_{\rm Dx})}  \notag \\
 \times \frac{1}{u}  \Biggl[ 1 - \Biggl\{ & \left[ (1-u)(1-\epsilon_{\rm Rx})^{\frac{1}{1+\rho}} + u\epsilon_{\rm Rx}^{\frac{1}{1+\rho}}\right]^{1+\rho} \notag \\
+ & \left[ (1-u)\epsilon_{\rm Rx}^{\frac{1}{1+\rho}} + u(1-\epsilon_{\rm Rx})^{\frac{1}{1+\rho}}\right]^{1+\rho} \Biggr\}^{\frac{1}{1+\rho}} \Biggl]. \label{eq:L_bsc2}
\end{align}
Now consider the function $g(x)$ and its derivative, i.e.
\begin{align}
g(x) &= \frac{1}{x} [ 1 - f(x)  ] \\
g'(x) &= -\frac{1}{x^2}  [ 1 - f(x)] -\frac{1}{x} f'(x)
\end{align}
where
\begin{align}
f(x)  &= \left\{ [ (1-x)a + xb]^{1+c} + [ (1-x)b + xa]^{1+c} \right\}^{\frac{1}{1+c}} \\
f'(x) &= (b-a)  f^{-c}(x) \left( [ (1-x)a + xb]^{c} - [ (1-x)b + xa]^{c} \right)
\end{align}
for $0\leq a,b,c  \leq 1$. Suppose $a \geq b$ and $0  \leq x \leq (1-b)/(a-b)$.  Then, from the above, $g'(x) < 0$ when $ [ (1-x)a + xb]^{c} \leq  [ (1-x)b + xa]^{c}$, or $x \geq \frac{1}{2}$.  Now suppose $b \geq a$ and $0  \leq x \leq (1-a)/(b-a)$. Then $g'(x) < 0$ when $ [ (1-x)a + xb]^{c} \geq  [ (1-x)b + xa]^{c}$, or $x \leq \frac{1}{2}$.  Hence,  $g(x)$ is a decreasing function for $0 \leq x \leq 1$.  Moreover, using L'Hospital's Rule
\begin{equation}
\lim_{x \rightarrow 0} g(x) = \frac{(a^c - b^c)(a-b)}{ \left(a^{1+c} + b^{1+c} \right)^{\frac{c}{1+c}}} \label{eq:fx_bsc}
\end{equation}
Hence, using $a = (1-\epsilon)^{\frac{1}{1+\rho}}$, $b = \epsilon^{\frac{1}{1+\rho}}$ and $c = \rho$, with some simplification~\eqref{eq:L_bsc2} reduces to~\eqref{eq:L_bsc} as stated. Finally, from Corollary~\ref{cor:L_asymp}, applying L'Hospital's Rule to obtain $\lim_{\rho,u \rightarrow 0} L_{\rm BSC}(\rho,u,\epsilon_{\rm Rx},\epsilon_{\rm Dx} )$ results in~\eqref{eq:bsc_asymp_L}.

%
%
\section{Proof of Theorem~\ref{thrm:awgn_ach_rate}} \label{app:proof_awgn_ach_rate}
For the AWGN channel with $p_{\tilde{X}}(x) = e^{-\frac{x^2}{2P}}/\sqrt{2 \pi P}$ computing the required integrals yields
\begin{align}
\mathcal{E}_{0}(\rho,p_{\tilde{X}},p_{Y|X}) &= \mathcal{E}_{0}(\rho,P,\sigma_{\rm Rx}^2) =  \frac{\rho}{2} \log  \left( 1+ \frac{1}{1+\rho} \frac{P}{\sigma_{\rm Rx}^2} \right) \label{eq:E0_awgn} \\
\chi^2( p_{\tilde{W}} \parallel p_{W|X=x_0}) &= \chi^2(P,\sigma_{\rm Dx}^2) = \frac{1}{\sqrt{1 - \frac{P^2}{\sigma_{\rm Dx}^4}}} - 1. \label{eq:chi_awgn}
\end{align}
Note that~\eqref{eq:E0_awgn} is equal to~\cite[eq.~(124)]{Gallager1965p3} with $r = 0$, since the LPD constraint is accounted for by~\eqref{eq:tau_constraint}. Substituting the above above expressions into~\eqref{eq:L_def} yields
\begin{align}
L_{\rm AWGN} (\rho,P,\sigma_{\rm Rx}^2, \sigma^2_{\rm Dx} ) = \frac{2\xi }{\log 2} \left(\frac{1+\rho}{\rho} \right) & \left[ \frac{1}{\sqrt{1 - \frac{P^2}{\sigma_{\rm Dx}^4}}} - 1 \right]^{-\frac{1}{2}} \notag \\
 \times & \left[ 1 - \left(1 + \frac{1}{1+\rho} \frac{P}{\sigma_{\rm Rx}^2} \right)^{-\frac{\rho}{2(1+\rho)}} \right].  \label{eq:L_awgn2}
\end{align}
Now consider the function
\begin{equation}
f(x) = \frac{ 1 - (1+ax)^{-c}}{\sqrt{\frac{1}{\sqrt{1 - x^2}}-1}} = \sqrt{2} ac  \left( \frac{1 - \frac{1}{2} c(c+1) a^2 x + \mathcal{O}(x^2) }{ 1 + \frac{3}{16} x^2 + \mathcal{O}(x^4)} \right)
\end{equation}
for $a,c > 0$. It is clear from the expansion of $f(x)$ that it is a decreasing function for $0 < x < 1$. Moreover,
\begin{equation}
\lim_{x \rightarrow 0} f(x) = \sqrt{2} a c. \label{eq:fx_awgn}
\end{equation}
Therefore, using $a = \frac{1}{1+\rho} \frac{\sigma_{\rm Dx}^2}{\sigma_{\rm Rx}^2}$ and $c = \frac{\rho}{2 (1+\rho)}$ in~\eqref{eq:fx_awgn} results in
\begin{align}
\lim_{P \rightarrow 0} L_{\rm AWGN} (\rho,P,\sigma_{\rm Rx}^2, \sigma^2_{\rm Dx} ) = \frac{\sqrt{2} \xi }{(1+\rho)\log 2} \frac{\sigma_{\rm Dx}^2}{\sigma_{\rm Rx}^2}.  \label{eq:L_awgn}
\end{align}
Hence substituting~\eqref{eq:L_awgn} into~\eqref{eq:ach_covert_bits} yields~\eqref{eq:awgn_lpd} as stated.  Now consider the function $g(x) = \frac{a}{1+x} - \frac{b}{x} $ for $a  > b > 0$.  It is not difficult to show that $g(x)$ is maximised when $ x = \frac{b}{a-b} \left(1 + \sqrt{\frac{a}{b}} \right)$.  Applying this solution to~\eqref{eq:awgn_lpd} yields the optimal value for $\rho$, i.e.~\eqref{eq:rho_opt_awgn}.  Applying Corollary~\ref{cor:optimal_n} and maximising over $\rho$ yields~\eqref{eq:R_star_awgn} and~\eqref{eq:n_star_awgn}. Finally~\eqref{eq:k_star_awgn} is simply the product of~\eqref{eq:R_star_awgn} and~\eqref{eq:n_star_awgn}.

\bibliographystyle{IEEE}


\end{document}